\documentclass[a4paper,fleqn,11pt]{article}

\usepackage{amsmath}
\usepackage{amsthm}
\usepackage{amssymb}
\usepackage{mathtools}
\usepackage{accents}
\usepackage{float}
\usepackage{comment}

\usepackage[a4paper,top=3cm, bottom=3cm, left=3cm, right=3cm]{geometry}
\usepackage[shortlabels,inline]{enumitem}

\usepackage[pdftex]{graphicx}

\usepackage[pdftex,ocgcolorlinks,pagebackref=false]{hyperref}
\usepackage{authblk}
\usepackage{tikz}

\setlist[enumerate,1]{label={(\roman*)}}

\bibliographystyle{alphaurl}

\usepackage[capitalise,noabbrev]{cleveref}

\Crefname{property}{Property}{Properties}

\theoremstyle{plain}
\newtheorem{theorem}{Theorem}[section]
\newtheorem{lemma}[theorem]{Lemma}
\newtheorem{proposition}[theorem]{Proposition}
\newtheorem{corollary}[theorem]{Corollary}
\newtheorem{remark}[theorem]{Remark}

\theoremstyle{definition}
\newtheorem{definition}[theorem]{Definition}

\newcommand{\norm}[2][]{\left\|#2\right\|_{#1}}

\newcommand{\ket}[1]{\left|#1\right\rangle}
\newcommand{\bra}[1]{\left\langle #1\right|}
\newcommand{\ketbra}[2]{\left|#1\middle\rangle\!\middle\langle#2\right|}
\newcommand{\braket}[2]{\left\langle#1\middle|#2\right\rangle}
\newcommand{\vectorstate}[1]{\ketbra{#1}{#1}}

\newcommand{\Diag}{\textnormal{Diag}}

\newcommand{\TV}{\textnormal{TV}}
\newcommand{\unitvectors}{\mathbb{S}}
\newcommand{\T}{\textnormal{T}}

\newcommand{\GHZ}{\textnormal{GHZ}}
\newcommand{\rank}{\textnormal{rank}}

\newcommand{\Hilbert}{\mathcal{H}}

\DeclareMathOperator{\boundeds}{\mathcal{B}}
\DeclareMathOperator{\Tr}{Tr}

\newcommand{\entropy}{H}

\newcommand{\relativeentropy}[3][]{\mathop{D_{#1}}\mathopen{}\left(#2\middle\|#3\right)\mathclose{}}
\newcommand{\binaryrelativeentropy}[3][]{\mathop{d_{#1}}\mathopen{}\left(#2\middle\|#3\right)\mathclose{}}

\newcommand{\distributions}[1][]{\mathcal{P}_{#1}}
\DeclareMathOperator{\states}{\mathcal{S}}
\newcommand{\unittensor}[1]{\langle{#1}\rangle}

\newcommand{\reals}{\mathbb{R}}
\newcommand{\complexes}{\mathbb{C}}
\newcommand{\naturals}{\mathbb{N}}
\newcommand{\integers}{\mathbb{Z}}
\newcommand{\nonnegativereals}{\mathbb{R}_{\ge 0}}

\newcommand{\LOCC}{\xrightarrow{\text{LOCC}}}

\DeclareMathOperator{\supp}{supp}

\title{Asymptotic equipartition property of subadditive multipartite entanglement measures on pure states}

\author[1]{D\'avid Bug\'ar}
\affil[1]{Department of Algebra and Geometry, Institute of Mathematics, Budapest University of Technology and Economics, M\H uegyetem~rkp. 3., H-1111 Budapest, Hungary.}
\date{}

\begin{document}

\maketitle

\begin{abstract}
We investigate the asymptotic equipartition property (AEP) in the context of multipartite entanglement measures on pure states. Specifically, we formulate AEP for subadditive entanglement measures that admit certain weak conditions. This is motivated by the uniqueness of the entanglement entropy in the asymptotic limit in the bipartite case. On the other hand, its operational relevance comes from the $\text{LOCC}_q$ scenario (asymptotic local operations and classical communication with a sublinear amount of quantum communication). Analogously to the classical AEP, we prove that the regularization of smooth weakly additive entanglement measures (subject to some weak extra conditions) yields weakly additive and asymptotically continuous entanglement measures. Then evaluate the mentioned regularization and smoothing on known R\'enyi type
multipartite entanglement measures, showing that the resulting regularized entanglement measures reduce to convex combinations of bipartite entanglement entropies.
\end{abstract}

\section{Introduction}

The asymptotic equipartition property (AEP) describes how long sequences of random variables are distributed \cite{cover2012elements}. Let $X_1, X_1, \dots, X_n$ be i.i.d. random variables distributed according to the probability distribution $P$ over the set $\mathcal{X}$. For large $n$ the sequences $X=(X_1,\dots, X_n)$ are concentrated on a typical set consisting of approximately $2^{n\entropy(P)}$ elements with probability close to $2^{-n\entropy(P)}$, where
$H(P)$ is the Shannon entropy of $P$.
This property comes handy in applications such as source compression. This tells us how many bits we need to store $n$ random bits sampled from the same distribution, tolerating a small error.  

An alternative formulation of this property is given by the R\'enyi entropies (to be introduced later) \cite{cover2012elements,tomamichel2009fully}. Informally, the smooth R\'enyi entropies $\entropy_\alpha^\epsilon(P)$ are defined as the infima (for $\alpha< 1$) and suprema ($\alpha>1$) of the respective entropies by ignoring $\epsilon$ probability. The AEP in terms of 
R\'enyi entropies is then formulated by the equality
\begin{equation}\label{eq:AEPentropic}
\lim_{\varepsilon \to 0} \lim_{n \to \infty} \frac{1}{n} H_\alpha^{\varepsilon}(P^{\otimes n}) = H(X),
\end{equation}
where $P^{\otimes n}$ denotes the product distribution $P\otimes\dots\otimes P$ over $\mathcal{X}^n$. 
Certain generalizations of the classical AEP were made in the framework of quantum information theory.
In \cite{tomamichel2009fully} they prove a generalization of the asymptotic equipartition property, when both the outcome and the side information of an of the experiment is quantum, while in \cite{fang2024generalized} a generalized quantum AEP was shown beyond the i.i.d. framework, showing that under some assumptions, all operationally relevant divergences converge to the quantum relative entropy.

In this work, we formulate the AEP for subadditive multipartite entanglement measures. 
The motivation comes from asymptotic entanglement transformations.
Local operations and classical communication (LOCC) channels allow the application of local completely positive maps as well as classical communication between the parties \cite{chitambar2014everything}. A natural question one could ask is whether a state $\psi$ can be transformed into $\varphi$ via LOCC transformation (one-shot setting) \cite{bennett2000exact}. In the asymptotic setting, one aims to find the optimal rate $R$ by which $n$ copies of the pure state $\psi$ can be transformed into $nR + o(n)$ copies of the pure output state $\varphi$ with probability 1. 

This setting can be relaxed by allowing a non-zero probability for failure \cite{bennett2000exact}. In particular, one could ask whether an asymptotic transformation between given pure states is possible with (allowing even vanishing) non-zero probability (SLOCC paradigm \cite{chitambar2008tripartite,yu2010tensor}), or with exponentially decaying failure probability (direct regime), or exponentially decaying success probability (strong converse regime) with a given error-exponent $r$. In the bipartite case, the relation between the transformation rate and error exponents was described in \cite{hayashi2002error}.

The characterization of the optimal rate in the strong converse regime was derived in \cite{jensen2019asymptotic}, building on the work of Strassen \cite{strassen1988asymptotic} in the context of tensors. They showed that the optimal rate is
\begin{equation}\label{eq:Rconverse}
R(\ket{\psi}\to\ket{\varphi},r)=\inf_{F \in \Delta(S_k)} \frac{r \alpha(F) + \log F(\psi)}{\log F(\varphi)},
\end{equation}
where $\Delta_k$ is the asymptotic spectrum of LOCC, i.e., the set of functionals $F\in\Delta_k$ mapping from $k$-partite vectors to $\reals$ that are invariant under local isometries, scale as $F(\sqrt{p}\psi)=p^\alpha F(\psi)$ for some $\alpha\in [0,1]$, are normalized on the $r$-level GHZ state (unit tensor) to $r$, multiplicative under tensor product, additive under tensor sum and monotone under LOCC transformations, i.e., if $\psi\LOCC\varphi$ then $F(\psi)\ge F(\varphi)$.

Although we have a characterization of the optimal achievable rates in terms of the entanglement measures that constitute $\Delta_k$, their explicit form is known only in the bipartite case \cite{hayashi2002error, jensen2019asymptotic}. 
These are exactly the functions $2^{(1-\alpha)\entropy_\alpha(\Tr_1\vectorstate{\psi})}$ where $\entropy_\alpha(\Tr_1\vectorstate{\psi})$ is the $\alpha$-R\'enyi entropy of the Schmidt coefficients of the bipartite pure state $\vectorstate{\psi}$. 
When $k\ge 3$, the explicit characterization of the spectrum is not known, partial results and bounds for the optimal transformation rates were provided in \cite{bugar2022interpolating,bugar2024explicit,bugar2025error}. In \cite{vrana2023family} a family of spectrum-elements parametrized by $\alpha\in [0,1]$ and a probability distribution $\theta\in \distributions([k])$ was constructed.
We will address these functionals in more detail in \cref{sub:eval}.

Another way to relax the asymptotic LOCC setting is given by allowing the fidelity between the target and the outcome states to (vanishingly) deviate from 1, while requiring an asymptotically vanishing error. If we even allow a sublinear amount of quantum communication between the parties, we recover $\text{LOCC}_q$ reducibility \cite{bennett2000exact}.

The optimal rate in the $\text{LOCC}_q$ paradigm was characterized in \cite{vrana2022asymptotic}, in a similar manner as the converse rate, as follows.
\begin{equation}\label{eq:RLOCCq}
R_{\text{LOCC}_{q}} \left( \ket{\psi} \to \ket{\varphi} \right) = \inf_{\substack{E\in\mathcal{F}_k \\ E({\varphi}) \neq 0}} \frac{E({\psi})}{E({\varphi})},
\end{equation}
where the infimum is taken over functionals $E\in\mathcal{F}_k$ defined on $k$-partite unit vectors (on arbitrary $k$-partite Hilbert spaces) such that $E$ is invariant under local isometries and
\begin{enumerate}[({F}1)]
    \item\label{it:norm}
    they are normalized on the $\GHZ$ state to $1$,
    \item \label{it:additive}
    {\it fully additive}, i.e., $E(\psi\otimes\varphi)=E(\psi)+E(\varphi)$
    \item\label{it:monotone}
    {\it monotone on average under LOCC}, i.e., assuming
    \begin{equation*}
\vectorstate{\psi}
\overset{\text{LOCC}}{\longrightarrow} 
\sum_{x \in \mathcal{X}} P(x) 
\lvert \varphi_x \rangle \langle \psi_x \rvert \otimes \lvert x \rangle \langle x \rvert,
\end{equation*}
where $P$ is a probability distribution over $\mathcal{X}$ and $x\in\mathcal{X}$ are the possible values of the classical register, we have 
\begin{equation*}
    E({\psi})\ge \sum_{x \in \mathcal{X}} P(x) 
E({\psi_x}),
\end{equation*}
\item\label{it:asymptoticCont}
asymptotically continuous (\cref{def:asymptcont}). By \cite{vrana2022asymptotic} this is equivalent to ask for 
\begin{equation*}
    E\left( \sqrt{p} \varphi \oplus \sqrt{1-p} \psi \right) = p E(\varphi) + (1-p) E(\psi) + h(p)
\end{equation*}
to hold for any $p\in [0,1]$ and pure $k$-partite vectors $\psi$, $\varphi$, if the other axioms are satisfied.
\end{enumerate}
Note that this optimization resembles \cref{eq:Rconverse}, but the role of logarithmic spectrum elements $\log F$ are taken by the functionals $E$ and $r$ is set to zero. 
In the bipartite case the only functional satisfying these conditions is the entanglement entropy $\entropy(\Tr_j\vectorstate{\psi})$, and the known transformation rate $\frac{\entropy(\Tr_j\vectorstate{\psi})}{\entropy(\Tr_j\vectorstate{\varphi})}$ \cite{bennett1996concentrating,thapliyal2003multipartite} is recovered. 

The fact that in the bipartite case the R\'enyi entropies characterizing the strong converse regime, are reduced to the entanglement entropy when considering $\text{LOCC}_q$ transformability, points to the direction that these settings may be connected even in the multipartite case via some generalization of the asymptotic equipartition property. 
In this paper we explore this connection as follows. After a short preliminary section, in \cref{sec:smoothedSpectrum} we formulate the asymptotic equipartition property (AEP) for subadditive multipartite entanglement measures by defining their smoothing and showing its properties.
In \cref{sec:AEPforknown} we calculate the smoothing limit of the family of spectrum elements introduced in \cite{vrana2023family}. These calculations yield the asymptotically continuous entanglement measures
\begin{equation}
    E^{\theta}(\psi)= \sum_{j\in [k]} \theta_j \entropy(\Tr_j\vectorstate{\psi}),
\end{equation}
where $\entropy(\Tr_j\vectorstate{\psi})$ is the von Neumann entropy of $\Tr_j\vectorstate{\psi}$ (called entanglement entropy), and $\{\theta_j\}_{j\in [k]}$ is a probability distribution over the subsystems.

\section{Notations and Preliminaries}

Throughout this paper, $\log$ is understood as a base $2$ logarithm.
We denote the set of integers from $1$ to $k\in\naturals$ by $[k]$.
The set of unit vectors in the Hilbert space $\Hilbert$ are denoted by $\unitvectors(\Hilbert)$.
In this paper we consider maps $E$, sometimes referenced as entanglement measures, defined over $k$-partite Hilbert spaces of arbitrary dimension. This is formally written as
\begin{equation}
        E : \bigcup_{d_1, \ldots, d_k=1}^\infty (\mathbb{C}^{d_1} \otimes \cdots \otimes \mathbb{C}^{d_k}) \to \mathbb{R}.
\end{equation}
These maps are well defined on tensor products and tensor sums in the following way. For the $k$-partite vectors $\psi\in\Hilbert_1\otimes\dots\otimes\Hilbert_k$ and $\varphi\in\mathcal{K}_1\otimes\dots\otimes\mathcal{K}_k$ their tensor product also forms a $k$-partite vector $\psi\otimes\varphi\in(\Hilbert_1\otimes\mathcal{K}_1)\otimes\dots\otimes(\Hilbert_k\otimes\mathcal{K}_k)$. Similarly, the tensor sum forms the $k$-partite vector $\psi\oplus\varphi\in\left(\Hilbert_1\otimes\dots\otimes \Hilbert_k\right)\oplus \left(\mathcal{K}_1\otimes\dots\otimes \mathcal{K}_k\right)\subseteq(\Hilbert_1\oplus \mathcal{K}_1)\otimes\dots\otimes(\Hilbert_k\oplus \mathcal{K}_k)$. We always treat tensor sums as elements of the latter, larger Hilbert space, because it possesses a $k$-partite structure.

For the states $\rho,\sigma\in\states(\Hilbert)$ we write $\rho\LOCC\sigma$ iff there is an LOCC channel $\Lambda$ such that $\Lambda(\rho)=\sigma$. When we want to emphasize the map $\Lambda$, we write $\rho\xrightarrow{\Lambda}\sigma$.
We will denote the set of unit vectors in the Hilbert space $\Hilbert$ by $\unitvectors(\Hilbert)$.
We associate unit vectors $\psi,\varphi\in\unitvectors(\Hilbert)$ with the pure states $\vectorstate{\psi},\vectorstate{\varphi}\in\states(\Hilbert)$, and write $\psi\LOCC\varphi$ when the associated states satisfy the corresponding relation. We also allow trace non-increasing transformations, addressing cases when the transformation succeeds with probability less than $1$. Formally, for any (not necessarily unit) vectors $\psi,\varphi\in\Hilbert$, $\ket{\psi}\LOCC\ket{\varphi}$ means that ${\textstyle \frac{\ket{\psi}}{\norm{\psi}}}$ can be transformed into ${\textstyle \frac{\ket{\varphi}}{\norm{\varphi}}}$ with success probability ${\textstyle \frac{\norm{\varphi}^2}{\norm{\psi}^2}}$, using a LOCC channel.

\subsection{R\'enyi entropies and statistical distances}

In the following we introduce some quantities from classical and quantum information theory, which will be used later in this work. These concepts can be found in the books \cite{csiszar2011information,cover2012elements,wilde2013quantum,tomamichel2015quantum,nielsen2010quantum}.
Let $\mathcal{X}$ be a finite set, and let $\{P(x)\}_{x\in\mathcal{X}}$ be a probability distribution over $\mathcal{X}$. For $\alpha>0$ and $\alpha\neq1$, the R\'enyi entropy \cite{renyi1961measures} of order $\alpha$ is defined as
\begin{equation}
    \entropy_\alpha(P)\coloneqq\frac{1}{1-\alpha}\log \sum_{x\in\mathcal{X}}P(x)^\alpha.
\end{equation}
The family of R\'enyi entropies is extended to $\alpha=1$ by
\begin{equation}
    \entropy_1(P)\coloneqq\lim_{\alpha\to1}\entropy_\alpha(P)=\entropy(P)=-\sum_{x\in\mathcal{X}}P(x)\log P(x),
\end{equation}
where we recover the Shannon entropy.
$\entropy_\alpha(P)$ is decreasing as a function of $\alpha$, which makes its limiting cases, the min entropy $H_\infty(P)\coloneqq\lim_{\alpha\to\infty}\entropy_\alpha(P)=-\log\max_{x\in\mathcal{X}}P(x)$ and the max entropy $H_0(P)\coloneqq\lim_{\alpha\to0}\entropy_\alpha(P)=\log\left|x\in\mathcal{X}: P(x)>0 \right|$ be the minimal and maximal elements of the one parameter family of R\'enyi entropies.
The R\'enyi entropies are nonnegative and equal to zero iff the distribution $P$ is the Dirac distribution. Their maximal value $\log\lvert\mathcal{X}\rvert$ is attained on the uniform distribution.
The R\'enyi entropies admit a variational characterization \cite{arikan1996inequality,merhav1999shannon,shayevitz2011renyi} as follows. Let $\alpha\in(0,1)$ and $P\in\distributions(\mathcal{X})$, then
\begin{equation}\label{eq:variationalRenyientropy}
\entropy_\alpha(P)=\max_{Q\in\distributions(\mathcal{X})}\left[\entropy(Q)-\frac{\alpha}{1-\alpha}\relativeentropy{Q}{P}\right].
\end{equation}
where $\relativeentropy{.}{.}$ is the Kullback–Leibler divergence (relative-entropy) of the distributions $P,Q\in\distributions(\mathcal{X})$, defined as
\begin{equation}
    \relativeentropy{P}{Q}  \coloneqq \sum_{x\in\mathcal{X}}P(x)\log\frac{P(x)}{Q(x)}
    =-\entropy(P)-\sum_{x\in\mathcal{X}}P(x)\log Q(x)
    .
\end{equation}
The Kullback–Leibler divergence is non-negative, and equals to zero iff $P=Q$. When $\lvert\mathcal{X}\rvert=2$, we use the notation
$h(p)=-p\log p -(1-p)\log (1-p)$ for the binary (Shannon) entropy, and $d(p,q)=p\log\frac{p}{q}+(1-p)\log\frac{1-p}{1-q}$ for the binary relative entropy.

Another measure of similarity between probability distributions is given by the total variation distance defined as
\begin{equation}
    \TV(P,Q)\coloneqq \frac{1}{2}\sum_{x\in\mathcal{X}}\lvert P(x)-Q(x) \rvert.
\end{equation}

In quantum information theory, instead of probability distributions, we mainly work with positive semidefinite unit trace operators over a Hilbert space $\rho, \sigma\in\states(\Hilbert)$, called states. The analogue of total variation distance for states is the trace distance
\begin{equation}
    \T(\rho,\sigma)\coloneqq\frac{1}{2}\norm[1]{\rho-\sigma}=\frac{1}{2}\Tr\lvert \rho-\sigma\rvert.
\end{equation}
When $\rho$ and $\sigma$ are diagonal, the trace distance reduces to the total variation distance.
The trace distance admits the data processing inequality
\begin{equation}
    \T(\rho,\sigma) \ge \T(\Lambda(\rho),\Lambda(\sigma)),
\end{equation}
where $\Lambda:\boundeds(\Hilbert)\to\boundeds(\mathcal{K})$ is a quantum channel, i.e., a completely positive trace preserving map.
Such channels include the partial trace and any LOCC transformation that keeps all the outcomes (otherwise it is trace non-increasing).
Considering pure states, the trace distance takes the form
\begin{equation}\label{eq:tracedistofpure}
    \T(\vectorstate{\varphi},\vectorstate{\psi})=\sqrt{1-\lvert\braket{\varphi}{\psi}\rvert^2}.
\end{equation}

\subsection{LOCC transformations}\label{sub:LOCC}

In this work, we follow the formulation of LOCC protocols given in \cite{jensen2019asymptotic}, which we summarize briefly here.
In this framework, states can be seen as the positive elements  $\rho\in\boundeds(\Hilbert_1)\otimes\dots\otimes\boundeds(\Hilbert_k)\otimes\Diag(\complexes^\mathcal{X})$, where $\mathcal{X}$ is a finite set and $\Diag(\complexes^{\mathcal{X}})$ is the space of diagonal matrices over $\complexes^{\mathcal{X}}$. This additional space serves as the classical register attached to the quantum system (available for all local parties). Note that normalization is not required, as mentioned before, to any unnormalized state $\rho$ we associate the normalized state $\rho/\Tr\rho$, while the norms will correspond to success probabilities between transformations.  

We call
\begin{equation}
    \Lambda : \rho \mapsto \sum_{y \in \mathcal{Y}} \left( (K_y)_i \otimes |y\rangle \langle f(y)| \right) \rho \left( (K_y^*)_i \otimes |f(y)\rangle \langle y| \right)
\end{equation}
a remembering one-step LOCC channel mapping from $\boundeds(\Hilbert_1)\otimes\dots\otimes\boundeds(\Hilbert_k)\otimes\Diag(\complexes^\mathcal{X})$ to $\boundeds(\Hilbert_1)\otimes\dots\otimes\boundeds(\mathcal{K}_i)\otimes\dots\otimes\boundeds(\Hilbert_k)\otimes\Diag(\complexes^{\mathcal{Y}})$, where $\mathcal{Y}$ is a finite index set, $f:\mathcal{Y}\mapsto\mathcal{X}$ is a map, and for each $y\in\mathcal{Y}$, $K_y:\Hilbert_y\mapsto\mathcal{K}_y$ is a Kraus operator (linear map satisfying $\sum_{y\in\mathcal{Y}}K_y^*K_y\otimes\vectorstate{f(y)}\le I_{\Hilbert_i\otimes\complexes^{\mathcal{Y}}}$) and $(K_y)_i=I_{\Hilbert_1}\otimes\dots\otimes K_y \otimes\dots\otimes I_{\Hilbert_k}$. 
Intuitively this means, that party $i$ reads out the classical register, which determines the local channel they use to produce the new state, including the new register value. 

In \cite[Prop. 2.13.]{jensen2019asymptotic} they had shown that any LOCC channel $\Lambda$ (between states without the classical register), can be written as successive application of remembering one-step LOCC channels $(\Lambda_1,\dots,\Lambda_n)$ followed by the partial trace on the classical register at the end of the protocol, i.e.
\begin{equation}
    \Lambda=\Tr_{\Diag(\complexes^{\mathcal{X}})} \circ\Lambda_n\circ\dots\circ\Lambda_1.
\end{equation}
Then it follows, that the outcome of any LOCC channel, when applied to a pure state, can be written in the form  
\begin{equation}\label{eq:conditionallyPure}
    \Lambda(\vectorstate{\psi}) = \Tr_{\Diag(\complexes^{\mathcal{X}})}\sum_{x\in\mathcal{X}} P(x)\vectorstate{\psi_x}\otimes\vectorstate{x}
    =\sum_{x\in\mathcal{X}} P(x)\vectorstate{\psi_x}
    .
\end{equation}
Positive operators in this form (after $\Tr_{\Diag(\complexes^{\mathcal{X}})}$) are called {\it conditionally pure states}.

\section{Asymptotic equipartition property of entanglement measures}\label{sec:smoothedSpectrum}

Inspired by the entropic formulation of the classical AEP, in the following we define the smoothing map, mapping between multipartite entanglement measures. 

\begin{definition}\label{def:smoothing}
Let $E$ be a map from the non-zero $k$-partite unit vectors to $\nonnegativereals$.  
For $\epsilon\in(0,1]$ let
\begin{equation}\label{eq:smoothedSpec}
         E^\epsilon(\psi)\coloneqq  \inf_{\substack{\varphi \in \unitvectors(\mathcal{H})\\ \lvert\langle \varphi \rvert\psi \rangle|^2 \geq 1 - \epsilon}} E(\varphi),
\end{equation}
and  
\begin{equation}\label{eq:smoothedSpec2}
        \Phi(E)(\psi)\coloneqq\lim_{\epsilon\to 0}\limsup_{n \to \infty} \frac{1}{n} E^\epsilon(\psi^{\otimes n})=\lim_{\epsilon\to0}\limsup_{n \to \infty} \frac{1}{n}
        \inf_{\substack{\varphi \in \unitvectors(\mathcal{H}^{\otimes n})\\ \lvert\langle \varphi \rvert\psi^{\otimes n} \rangle|^2 \geq 1-\epsilon}} E(\varphi)
\end{equation}
be the {\it smoothing limit},
where the existence of the limit is the consequence of the monotonicity of $E^\epsilon(\psi^{\otimes n})$ in $\epsilon$, and that $E^\epsilon(\psi)\le E(\psi)$.

\end{definition}

\begin{proposition}\label{prop:subadditivityKeeping}
    Let $E$ be a map from the non-zero $k$-partite unit vectors to $\nonnegativereals$. Then $\Phi(E)$ is weakly additive, i.e. $\Phi(E)(\psi_1^{\otimes n})=n\Phi(E)(\psi_1)$, for any $k$-partite unit vector $\psi_1$ and $n\in\integers$. Moreover if $E$ is subadditive under the tensor product, then $\Phi(E)$ is also subadditive, i.e. $\Phi(E)(\psi_1\otimes\psi_2)\le \Phi(E)(\psi_1)+\Phi(E)(\psi_2)$ for any $k$-partite unit vector $\psi_1, \psi_2$.
    \end{proposition}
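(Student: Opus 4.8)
The plan is to argue directly from \cref{def:smoothing} by manipulating the smoothing balls and then interchanging the two limits. I identify $(\psi_1^{\otimes m})^{\otimes n}$ with $\psi_1^{\otimes mn}$ and $(\psi_1\otimes\psi_2)^{\otimes n}$ with $\psi_1^{\otimes n}\otimes\psi_2^{\otimes n}$ via the canonical local reorderings, which $E$ and hence $E^\epsilon$ respect. Fix $m\in\positiveintegers$, $\epsilon\in(0,1)$, and set $a_N\coloneqq\tfrac1N E^\epsilon(\psi_1^{\otimes N})$. One inequality of weak additivity is immediate: $\tfrac1n E^\epsilon((\psi_1^{\otimes m})^{\otimes n})=m\,a_{mn}$, and $(a_{mn})_n$ is a subsequence of $(a_N)_N$, so $\limsup_n\tfrac1n E^\epsilon((\psi_1^{\otimes m})^{\otimes n})=m\limsup_n a_{mn}\le m\limsup_N a_N$; letting $\epsilon\to0$ gives $\Phi(E)(\psi_1^{\otimes m})\le m\,\Phi(E)(\psi_1)$.

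For the reverse inequality I would prove the sharper statement $\limsup_N a_N\le\limsup_n a_{mn}$ for each fixed $\epsilon$. Choose $N_j\to\infty$ with $a_{N_j}\to\limsup_N a_N$, and by the pigeonhole principle pass to a subsequence along which $N_j=mn_j+r$ with $r\in\{0,\dots,m-1\}$ fixed. Pick $\chi_j\in\unitvectors(\mathcal{H}^{\otimes mn_j})$ with $|\langle\chi_j|\psi_1^{\otimes mn_j}\rangle|^2\ge1-\epsilon$ and $E(\chi_j)\le E^\epsilon(\psi_1^{\otimes mn_j})+\tfrac1j$. Then $\chi_j\otimes\psi_1^{\otimes r}$ is a unit vector with $|\langle\chi_j\otimes\psi_1^{\otimes r}|\psi_1^{\otimes N_j}\rangle|^2=|\langle\chi_j|\psi_1^{\otimes mn_j}\rangle|^2\ge1-\epsilon$, hence admissible in the infimum defining $E^\epsilon(\psi_1^{\otimes N_j})$; so $E^\epsilon(\psi_1^{\otimes N_j})\le E(\chi_j\otimes\psi_1^{\otimes r})$. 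Bounding $E(\chi_j\otimes\psi_1^{\otimes r})\le E(\chi_j)+c$ with $c\coloneqq\max_{0\le r<m}E(\psi_1^{\otimes r})$ — the one place where a structural property of $E$ is needed; subadditivity suffices, and $c$ is washed out by the normalization — gives $a_{N_j}\le\tfrac{mn_j}{N_j}a_{mn_j}+\tfrac{c+1/j}{N_j}$. Sending $j\to\infty$ and using $mn_j/N_j\to1$ yields $\limsup_N a_N\le\limsup_j a_{mn_j}\le\limsup_n a_{mn}$. Together with the previous paragraph this gives $\limsup_n\tfrac1n E^\epsilon((\psi_1^{\otimes m})^{\otimes n})=m\limsup_N a_N$ for every $\epsilon$, and then $\epsilon\to0$ gives $\Phi(E)(\psi_1^{\otimes m})=m\,\Phi(E)(\psi_1)$.

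For the ``moreover'' part, assume $E$ subadditive, fix $\epsilon\in(0,\tfrac12]$, and for $i=1,2$ take near-optimal approximations $\chi_i\in\unitvectors(\mathcal{H}_i^{\otimes n})$ of $\psi_i^{\otimes n}$ with $|\langle\chi_i|\psi_i^{\otimes n}\rangle|^2\ge1-\epsilon$. Then $|\langle\chi_1\otimes\chi_2|\psi_1^{\otimes n}\otimes\psi_2^{\otimes n}\rangle|^2=|\langle\chi_1|\psi_1^{\otimes n}\rangle|^2\,|\langle\chi_2|\psi_2^{\otimes n}\rangle|^2\ge(1-\epsilon)^2\ge1-2\epsilon$, so $\chi_1\otimes\chi_2$ is admissible for $E^{2\epsilon}((\psi_1\otimes\psi_2)^{\otimes n})$, whence $E^{2\epsilon}((\psi_1\otimes\psi_2)^{\otimes n})\le E(\chi_1\otimes\chi_2)\le E(\chi_1)+E(\chi_2)$ by subadditivity; taking infima over $\chi_1,\chi_2$, $E^{2\epsilon}((\psi_1\otimes\psi_2)^{\otimes n})\le E^\epsilon(\psi_1^{\otimes n})+E^\epsilon(\psi_2^{\otimes n})$. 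Dividing by $n$, using subadditivity of $\limsup$, and letting $\epsilon\to0$ (so $2\epsilon\to0$, and the limit over the smoothing parameter is recovered), the left side tends to $\Phi(E)(\psi_1\otimes\psi_2)$ and the right to $\Phi(E)(\psi_1)+\Phi(E)(\psi_2)$. The one genuinely delicate point is the reverse inequality of weak additivity: the rest is unwinding the definitions, but there one must verify that discarding the non-$m$-divisible tail $\psi_1^{\otimes r}$ costs only $O(1)$, which is precisely what tensoring the approximant with the fixed vector $\psi_1^{\otimes r}$ together with the subadditivity bound provide.
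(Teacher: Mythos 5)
Your argument is correct, and on the subadditivity part it is essentially the paper's own proof: both restrict the smoothing infimum for $(\psi_1\otimes\psi_2)^{\otimes n}$ to product approximants $\varphi_1\otimes\varphi_2$, use that the overlaps multiply, apply subadditivity of $E$, and then trade smoothing parameters (each factor at $\epsilon$, product at $2\epsilon$ in your version; product at $\epsilon$, factors at $\epsilon/2$ via $\sqrt{1-\epsilon}\le 1-\tfrac{1}{2}\epsilon$ in the paper's), which after dividing by $n$ and letting $\epsilon\to 0$ is the same argument.

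Where you genuinely differ is weak additivity. The paper handles it in one line, asserting the reindexing identity $\limsup_n \tfrac{1}{n}E^\epsilon(\psi^{\otimes mn}) = m\limsup_n \tfrac{1}{n}E^\epsilon(\psi^{\otimes n})$, i.e.\ that the limsup of $\tfrac{1}{N}E^\epsilon(\psi^{\otimes N})$ along the arithmetic progression $N=mn$ equals the full limsup; only the inequality $\le$ is automatic. You supply the nontrivial reverse inequality via the pigeonhole/remainder device (tensor a near-optimal approximant of $\psi_1^{\otimes mn_j}$ with the fixed tail $\psi_1^{\otimes r}$ and absorb the $O(1)$ cost), which is a genuine gain in rigor over the paper's assertion. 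Note, however, that this step uses subadditivity of $E$, whereas the proposition claims weak additivity for an arbitrary nonnegative $E$; your proof therefore establishes the weak-additivity claim only in the subadditive case. This restriction is harmless in context --- the proposition is only ever applied to $E\in\mathcal{F}_{\text{sub},k}$ --- and some hypothesis of this kind is in fact needed: for a completely arbitrary $E$ (say, one depending only on the ambient local dimensions in an alternating fashion) the sequence $\tfrac{1}{N}E^\epsilon(\psi^{\otimes N})$ can oscillate so that its limsup along multiples of $m$ is strictly smaller than the full limsup, leaving only $\Phi(E)(\psi_1^{\otimes m})\le m\,\Phi(E)(\psi_1)$. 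So your proof is sound where it matters and makes explicit a point that the paper's one-line justification glosses over.
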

\begin{proof}
The weak additivity is the direct consequence of the regularization in the definition, namely that
\begin{equation}
    \limsup_{n \to \infty} \frac{1}{n}\Phi_{\epsilon} (E)(\psi^{\otimes mn})=
    \limsup_{n \to \infty} \frac{m}{n}\Phi_{\epsilon} (E)(\psi^{\otimes n}).
\end{equation}

The non-negativity of $\Phi(E)$ follows trivially from the definition of $\Phi$ and the non-negativity of $E$. 
Let $\psi=\psi_1\otimes\psi_2$, $\psi_1\in\unitvectors(\Hilbert)$ and $\psi_2\in\unitvectors(\mathcal{K})$ be unit vectors, where $\Hilbert=\Hilbert_1\otimes\dots\otimes\Hilbert_k$ and $\mathcal{K}=\mathcal{K}_1\otimes\dots\otimes\mathcal{K}_k$. By considering $\varphi=\varphi_1\otimes\varphi_2$ where $\varphi_1\in\unitvectors(\Hilbert^{\otimes n})$ and $\varphi_2\in\unitvectors(\mathcal{K}^{\otimes n})$, and
noticing that $\lvert\langle \varphi_1\rvert\psi_1^{\otimes n}
    \rangle\rvert^2\ge \sqrt{1-\epsilon}$ together with $\lvert\langle \varphi_2 \rvert\psi_2^{\otimes n}
    \rangle\rvert^2 \ge \sqrt{1-\epsilon}$ implies $\lvert\langle \varphi\rvert\psi_1^{\otimes n}
    \otimes
    \psi_2^{\otimes n}
    \rangle\rvert^2 \geq 1-\epsilon$,
we write
\begin{equation}
\begin{split}
    & \inf_{\substack{\varphi \in \unitvectors(\mathcal{H}^{\otimes n}\otimes\mathcal{K}^{\otimes n})\\ \lvert\langle \varphi\rvert\psi_1^{\otimes n}
    \otimes
    \psi_2^{\otimes n}
    \rangle\rvert^2 \geq 1-\epsilon}} E(\varphi)\le \\
    &\inf_{\substack{\varphi_1 \in \unitvectors(\mathcal{H}^{\otimes n})\\ \lvert\langle \varphi_1\rvert\psi_1^{\otimes n}
    \rangle\rvert^2 \geq \sqrt{1-\epsilon}}} \qquad
    \inf_{\substack{\varphi_2 \in \unitvectors(\mathcal{K}^{\otimes n})\\ \lvert\langle \varphi_2 \rvert\psi_2^{\otimes n}
    \rangle\rvert^2 \geq \sqrt{1-\epsilon}}} 
    E(\varphi_1\otimes\varphi_2)\le\\
    &\inf_{\substack{\varphi_1 \in \unitvectors(\mathcal{H}^{\otimes n})\\ \lvert\langle \varphi_1\rvert\psi_1^{\otimes n}
    \rangle\rvert^2 \geq \sqrt{1 - \epsilon}}}
    E(\varphi_1)+
    \inf_{\substack{\varphi_2 \in \unitvectors(\mathcal{K}^{\otimes n})\\ \lvert\langle \varphi_2 \rvert\psi_2^{\otimes n}
    \rangle\rvert^2 \geq \sqrt{1 - \epsilon}}} 
    E(\varphi_2)\le\\
    &\inf_{\substack{\varphi_1 \in \unitvectors(\mathcal{H}^{\otimes n})\\ \lvert\langle \varphi_1\rvert\psi_1^{\otimes n}
    \rangle\rvert^2 \geq 1 - \frac{1}{2}\epsilon}}
    E(\varphi_1)+
    \inf_{\substack{\varphi_2 \in \unitvectors(\mathcal{K}^{\otimes n})\\ \lvert\langle \varphi_2 \rvert\psi_2^{\otimes n}
    \rangle\rvert^2 \geq 1 - \frac{1}{2}\epsilon}} 
    E(\varphi_2)
\end{split}
\end{equation}
In the second inequality we used the subadditivity of $E$, then the bound $\sqrt{1-\epsilon}\le1-\frac{1}{2}\epsilon$. We finish this by dividing the inequality by $n$ and taking the limsup $n\to\infty$, then the limit $\epsilon \to 0$.  

\end{proof}

\begin{definition}\label{def:deltasub}
    We denote the set of those maps
    $E$ mapping $k$-partite unit vectors into $\nonnegativereals$ by $\mathcal{F}_{\text{sub},k}$, which are subadditive, i.e. $E(\psi_1\otimes \psi_2)\le E(\psi_1)+E(\psi_2)$; logarithmically bounded, i.e.
    \begin{equation}\label{eq:boundedSum}
    E\left({\psi_1}\oplus\dots\oplus{\psi_l}\right)\le
    \max_{i\in[l]} E\left(\frac{\psi_i}{\norm{\psi_i}}\right) + \log l,
\end{equation}
where $\psi_i$ are such that $\psi_1\oplus\dots\oplus\psi_l$ is a unit vector; and monotone on average \labelcref{it:monotone}.
    
\end{definition}

\begin{proposition}
    $\mathcal{F}_k\subset \mathcal{F}_{\text{sub},k}$
    \end{proposition}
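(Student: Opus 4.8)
The plan is to verify the three defining requirements of $\mathcal{F}_{\text{sub},k}$ for an arbitrary $E\in\mathcal{F}_k$. Monotonicity on average is precisely condition \labelcref{it:monotone}, which is already imposed on $\mathcal{F}_k$, so there is nothing to do there; and subadditivity is a weakening of full additivity \labelcref{it:additive} and hence automatic. What genuinely needs an argument is, first, that $E$ actually takes values in $\nonnegativereals$ rather than just $\reals$ (so that it is a legitimate element of $\mathcal{F}_{\text{sub},k}$ at all), and second, the logarithmic bound \labelcref{eq:boundedSum}.

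For non-negativity I would first observe that $E$ vanishes on fully product vectors: for $\phi=\ket{a_1}\otimes\dots\otimes\ket{a_k}$ choose, for each party $j$, a local isometry $U_j$ with $U_j\ket{a_j}=\ket{a_j}\otimes\ket{a_j}$ (possible since $d_j^2\ge d_j$); then $\bigotimes_j U_j$ maps $\phi$ to $\phi\otimes\phi$, so local-isometry invariance together with \labelcref{it:additive} forces $E(\phi)=E(\phi\otimes\phi)=2E(\phi)$, i.e.\ $E(\phi)=0$. Now for an arbitrary $k$-partite unit vector $\psi$, the protocol in which every party measures its own subsystem in a fixed orthonormal basis and broadcasts the outcome is LOCC, and every conditional post-measurement state is a product basis vector; hence \labelcref{it:monotone} gives $E(\psi)\ge\sum_x P(x)\cdot 0=0$.

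For the logarithmic bound, write $\psi_i=\sqrt{p_i}\,\hat\psi_i$ with $\hat\psi_i=\psi_i/\norm{\psi_i}$ and $p_i=\norm{\psi_i}^2$, so that $(p_1,\dots,p_l)$ is a probability distribution with all $p_i>0$. Peeling off the first summand as $\psi_1\oplus\dots\oplus\psi_l=\sqrt{p_1}\,\hat\psi_1\oplus\sqrt{1-p_1}\,\omega$ with $\omega=\bigoplus_{i=2}^l\sqrt{p_i/(1-p_1)}\,\hat\psi_i$ a unit vector, applying \labelcref{it:asymptoticCont}, and using the grouping identity $\entropy(p_1,\dots,p_l)=h(p_1)+(1-p_1)\entropy\!\left(\tfrac{p_2}{1-p_1},\dots,\tfrac{p_l}{1-p_1}\right)$ for the Shannon entropy, an induction on $l$ yields
\begin{equation*}
E\!\left(\psi_1\oplus\dots\oplus\psi_l\right)=\sum_{i=1}^l p_i\,E(\hat\psi_i)+\entropy(p_1,\dots,p_l).
\end{equation*}
Bounding the convex combination by $\sum_i p_i E(\hat\psi_i)\le\max_i E(\hat\psi_i)$ and using $\entropy(p_1,\dots,p_l)\le\log l$ then gives exactly \labelcref{eq:boundedSum}. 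No step is a real obstacle; the only mildly fiddly part is the inductive bookkeeping in this last display (and noting that, since \labelcref{eq:boundedSum} already presupposes each $\psi_i\ne 0$, no degenerate weights arise), and the whole point is really just to invoke \labelcref{it:additive}--\labelcref{it:asymptoticCont} together with local-isometry invariance in the right order.
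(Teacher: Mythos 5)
Your proof is correct and follows essentially the same route as the paper: the logarithmic bound \eqref{eq:boundedSum} is obtained by iterating the equality in \labelcref{it:asymptoticCont} together with the chain rule (grouping identity) of the Shannon entropy, and subadditivity and monotonicity on average are immediate from \labelcref{it:additive} and \labelcref{it:monotone}. Your extra argument that $E\ge 0$ (vanishing on product vectors via local-isometry invariance plus additivity, then monotonicity on average) is a valid addition covering a point the paper's proof passes over silently.
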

    \begin{proof}
    Let $E\in\mathcal{F}_k$.
    To show \cref{eq:boundedSum} we write
    \begin{equation}
    \begin{split}
        E\left(\psi_1\oplus\dots\oplus\psi_l\right) &=
        \sum_{i\in[l]} \norm{\psi_i}^2 E\left(\frac{\psi_i}{\norm{\psi_i}}\right)+ \entropy(\{\norm{\psi_i}^2\}_{i\in[l]})\\
        &\le
        \max_{i\in[l]}E\left(\frac{\psi_i}{\norm{\psi_i}}\right) + \log l
        \end{split}
    \end{equation}
    by successively using \labelcref{it:asymptoticCont} along with the chain rule of Shannon entropy to get the first inequality. The rest is true by assumption.
\end{proof}

\begin{lemma}\label{lem:weaksuperadd}
    Let $E\in\mathcal{F}_{\text{sub},k}$. Then
    \begin{equation}
        \Phi(E)(\psi_1\otimes\psi_2)\ge \Phi(E)(\psi_1)
    \end{equation}
    for $\psi_1\in\Hilbert$ and $\psi_2\in\mathcal{K}$.
\end{lemma}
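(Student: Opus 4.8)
The plan is to show that smoothing $\psi_1$ alone costs no more, asymptotically, than smoothing $\psi_1 \otimes \psi_2$, so one should exhibit, from a near-optimal smoothing of $(\psi_1\otimes\psi_2)^{\otimes n} = \psi_1^{\otimes n}\otimes\psi_2^{\otimes n}$, a good smoothing of $\psi_1^{\otimes n}$ whose value of $E$ is not much larger. First I would fix $\epsilon>0$ and $n$, and let $\varphi\in\unitvectors(\Hilbert^{\otimes n}\otimes\mathcal{K}^{\otimes n})$ be a (nearly) optimal vector achieving $E^\epsilon((\psi_1\otimes\psi_2)^{\otimes n})$, so $|\langle\varphi|\psi_1^{\otimes n}\otimes\psi_2^{\otimes n}\rangle|^2\ge 1-\epsilon$. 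The natural move is to "project out" the second tensor factor: write $|\varphi\rangle = \sum_{x} \sqrt{q(x)}\,|\chi_x\rangle\otimes|\xi_x\rangle$ in a form adapted to the bipartition (Hilbert block vs. $\mathcal{K}$ block), or more concretely apply on the $\mathcal{K}^{\otimes n}$ side the local operation (it is LOCC, indeed just a local projection/measurement followed by conditioning) that tries to map $\xi_x$ towards $\psi_2^{\otimes n}$. The cleanest version: take the vector $\tilde\varphi \coloneqq (I\otimes\langle\psi_2^{\otimes n}|)|\varphi\rangle / \|(I\otimes\langle\psi_2^{\otimes n}|)|\varphi\rangle\|$, a unit vector in $\Hilbert^{\otimes n}$; its overlap with $\psi_1^{\otimes n}$ is controlled, since $|\langle\psi_1^{\otimes n}|\tilde\varphi\rangle|^2 \cdot \|(I\otimes\langle\psi_2^{\otimes n}|)|\varphi\rangle\|^2 \ge |\langle\varphi|\psi_1^{\otimes n}\otimes\psi_2^{\otimes n}\rangle|^2 \ge 1-\epsilon$, and the normalization factor is itself $\ge 1-\epsilon$, giving overlap $\ge 1-\epsilon$ (or $\ge(1-\epsilon)$ up to a harmless rescaling of $\epsilon$).

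**Relating $E(\tilde\varphi)$ to $E(\varphi)$.**
The key point is then to bound $E(\tilde\varphi)$ from above by $E(\varphi)$ plus a sublinear term, using the axioms defining $\mathcal{F}_{\text{sub},k}$. The vector $\tilde\varphi$ (suitably embedded) is obtained from $\varphi$ by a local, trace-non-increasing LOCC operation on party... — but here is the subtlety: $E$ is monotone \emph{on average} under LOCC, and monotonicity controls $E$ going \emph{down} along LOCC, i.e. bounds $E(\tilde\varphi)$ from \emph{below} relative to some average over $\varphi$, which is the wrong direction. So instead I would run the argument the other way: express $\varphi$ as obtainable from $\tilde\varphi\otimes(\text{something})$ via LOCC, or — more robustly — use that $\varphi$ appears (up to normalization) as one branch of a conditionally pure state produced from $\tilde\varphi \otimes \psi_2^{\otimes n}$ or from an appropriate purification, and invoke the logarithmic boundedness \labelcref{eq:boundedSum} together with subadditivity to absorb the error branches into a $+\log(\text{number of branches})$ term that vanishes after dividing by $n$ and letting $\epsilon\to 0$. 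Concretely: decomposing the measurement on the $\mathcal K^{\otimes n}$ side that isolates the $\psi_2^{\otimes n}$ component gives $\psi_1^{\otimes n}\otimes\psi_2^{\otimes n}$ (approximately) as a sum $\sqrt{q_0}\,\tilde\varphi\otimes\eta_0 \oplus \sqrt{1-q_0}\,(\text{rest})$, and combining \eqref{eq:boundedSum} with monotonicity-on-average applied in the feasible direction yields $E(\varphi)\ge q_0 E(\tilde\varphi) + \cdots$, hence $E(\tilde\varphi)\le \frac{1}{q_0}\bigl(E(\varphi) + O(1)\bigr)$ with $q_0\to 1$; dividing by $n$ kills the $O(1)$ and the $\frac{1}{q_0}$ factor tends to $1$.

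**Main obstacle and cleanup.**
I expect the genuine difficulty to be exactly this directionality issue: we need an \emph{upper} bound on $E(\tilde\varphi)$ in terms of $E(\varphi)$, whereas the only structural tools are subadditivity (which bounds $E$ of a product from above by a sum), logarithmic boundedness of tensor sums, and monotonicity-on-average under LOCC (a lower bound). The trick will be to realize $\varphi$ — or rather a state very close to $\varphi$ — as (a dominant branch of) the LOCC image of $\tilde\varphi$ tensored with a fixed reference state carrying $E$-value at most $E(\psi_2^{\otimes n})$ plus a correction, and then to control $E(\psi_2^{\otimes n})/n$ itself; but note we only need the $\psi_1$-side, and since we may compare with the trivial bound we need only that the reference cost is $O(\log n)$ or that it cancels. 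If that realization of $\varphi$ via LOCC from $\tilde\varphi$ is not literally available, the fallback is to bound things through fidelities: $\tilde\varphi$ and $\psi_1^{\otimes n}$ are $\epsilon$-close, $\varphi$ and $\psi_1^{\otimes n}\otimes\psi_2^{\otimes n}$ are $\epsilon$-close, so it suffices to compare $E^{\epsilon'}(\psi_1^{\otimes n})$ with $E^{\epsilon'}((\psi_1\otimes\psi_2)^{\otimes n})$ directly via a feasible candidate in the first infimum built by tensoring the second infimum's candidate with $\psi_2^{\otimes n}$ and invoking subadditivity — giving $E^{\epsilon}(\psi_1^{\otimes n}) \le E^{\epsilon}((\psi_1\otimes\psi_2)^{\otimes n}) + E(\psi_2^{\otimes n})$ is again the wrong sign, so the correct feasible move is the projection described above. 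Once the inequality $\frac1n E^{\epsilon}(\psi_1^{\otimes n}) \le \frac1n E^{\epsilon'}\!\bigl((\psi_1\otimes\psi_2)^{\otimes n}\bigr) + o(1)$ is in hand with $\epsilon'\to0$ as $\epsilon\to0$, taking $\limsup_{n\to\infty}$ then $\epsilon\to0$ on both sides gives $\Phi(E)(\psi_1)\le\Phi(E)(\psi_1\otimes\psi_2)$, which is the claim.
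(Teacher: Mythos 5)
Your overall architecture (measure away the $\mathcal{K}^{\otimes n}$ factor of a near-optimal $\varphi$, use monotonicity-on-average in the feasible direction together with non-negativity, and let the sublinear corrections vanish after dividing by $n$) is close in spirit to the paper's argument, but the specific operation you build the proof on is not available. The projection of the $\mathcal{K}^{\otimes n}$ factor onto $\ket{\psi_2^{\otimes n}}$, i.e.\ the two-outcome measurement $\{I\otimes\vectorstate{\psi_2^{\otimes n}},\,I-I\otimes\vectorstate{\psi_2^{\otimes n}}\}$ producing $\tilde\varphi=(I\otimes\bra{\psi_2^{\otimes n}})\ket{\varphi}/\norm{\cdot}$, is ``local'' only with respect to the bipartition $\Hilbert^{\otimes n}$ versus $\mathcal{K}^{\otimes n}$, which is not the locality structure that matters here: each of the $k$ parties holds $\Hilbert_j^{\otimes n}\otimes\mathcal{K}_j^{\otimes n}$, and $\psi_2$ is in general entangled across the parties, so projecting onto $\psi_2^{\otimes n}$ is a joint measurement on all parties' $\mathcal{K}_j^{\otimes n}$ shares and is not implementable by LOCC. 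Consequently the key step $E(\varphi)\ge q_0\,E(\tilde\varphi)+\cdots$ cannot be justified from \labelcref{it:monotone}, which is the only monotonicity you have; neither subadditivity nor the logarithmic bound \eqref{eq:boundedSum} can substitute, since the split of $\varphi$ along $\psi_2^{\otimes n}$ is not a tensor-sum decomposition compatible with the local structure, and \eqref{eq:boundedSum} bounds $E$ of a sum from above, not below. (A secondary point: even granting the projection, the surviving branch is $\tilde\varphi\otimes\psi_2^{\otimes n}$, not $\tilde\varphi$, so passing to $E(\tilde\varphi)$ would need superadditivity --- which is essentially what the lemma is trying to establish --- or an extra local discarding step.)

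The repair is exactly what the paper does: instead of projecting onto the entangled vector $\psi_2^{\otimes n}$, let each party perform a complete local measurement of its own $\mathcal{K}_j^{\otimes n}$ share (a genuine LOCC channel $\mathcal{M}$), which turns $\vectorstate{\varphi}$ into a conditionally pure state $\sum_x Q(x)\vectorstate{\varphi_x}\otimes\vectorstate{x}$ with $\varphi_x\in\Hilbert^{\otimes n}$, and maps $(\psi_1\otimes\psi_2)^{\otimes n}$ to $\psi_1^{\otimes n}$ on every branch. Since $\lvert\braket{\varphi}{(\psi_1\otimes\psi_2)^{\otimes n}}\rvert^2\ge 1-\epsilon$, the data-processing argument of \cref{lem:LOCCoutcomeproblower} shows that the branches with $\lvert\braket{\varphi_x}{\psi_1^{\otimes n}}\rvert^2\ge 1-\epsilon'$ carry total probability at least $1-2\sqrt{\epsilon}\left(1+1/\sqrt{\epsilon'}\right)$; then monotonicity on average plus $E\ge 0$ gives $E(\varphi)\ge\left(1-2\sqrt{\epsilon}\left(1+1/\sqrt{\epsilon'}\right)\right)\inf_{\varphi'}E(\varphi')$ over $\varphi'$ that are $\epsilon'$-feasible for $\psi_1^{\otimes n}$, and taking $\limsup_n$, $\epsilon\to0$, $\epsilon'\to0$ yields the claim. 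So the missing ingredient in your write-up is precisely this replacement of the non-LOCC projection by a local product-basis measurement together with the probabilistic control of the good branches.
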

\begin{proof}
Let $\varphi\in\unitvectors(\Hilbert_1^{\otimes n}\otimes\Hilbert_2^{\otimes n})$ be such that $\lvert\braket{\varphi}{\left(\psi_1\otimes\psi_2\right)^{\otimes n}}\rvert^2\ge 1-\epsilon$
    We apply a measurement channel $\mathcal{M}$ on $\Hilbert_2^{\otimes n}$, which by \cref{eq:conditionallyPure} transforms any vector as
    \begin{equation}
        \mathcal{M}(\vectorstate{\varphi})=
        \sum_{x\in\mathcal{X}} Q(x)\vectorstate{\varphi_x}\otimes\vectorstate{x},
    \end{equation}
where $\varphi_x\in\Hilbert_1^{\otimes n}$ for each outcome $x\in\mathcal{X}$.
In particular $(\psi_1\otimes\psi_2)^{\otimes n}$ is transformed into $\psi_1^{\otimes n}$ by this channel (up to the classical register). By \cref{lem:LOCCoutcomeproblower} and the monotonicity on average we get
\begin{equation}
    \begin{split}
    E(\varphi)&\ge
    \sum_{x\in\mathcal{X}} Q(x)
    E(\varphi_x)\\
    &\ge
        \sum_{\substack{x\in\mathcal{X}\\\lvert\braket{\varphi_x}{\psi_1^{\otimes n}}\rvert^2\ge 1-\epsilon'}} 
     Q(x)
    E(\varphi_x)\\
    &\ge
        \sum_{\substack{x\in\mathcal{X}\\\lvert\braket{\varphi_x}{\psi_1^{\otimes n}}\rvert^2\ge 1-\epsilon'}} 
     Q(x)
     \inf_{\substack{\varphi \in \unitvectors(\Hilbert^{\otimes n})\\ \lvert\langle \varphi \rvert\psi_1^{\otimes n} \rangle|^2 \geq 1 - \epsilon'}}
    E(\varphi)\\
    &\ge
    \left(1
     -2\sqrt{\epsilon}\left(1+\frac{1}{\sqrt{\epsilon'}}\right)\right)
     \inf_{\substack{\varphi \in \unitvectors(\Hilbert^{\otimes n})\\ \lvert\langle \varphi \rvert\psi_1^{\otimes n} \rangle|^2 \geq 1 - \epsilon'}}
    E(\varphi)\\
    .
\end{split}
\end{equation}
The proof is finished by applying \cref{def:smoothing}, i.e. taking the infimum over $\varphi$ on the left hand side then we divide the inequality by $n$, take the $\limsup$ in $n$ and the limit $\epsilon\to 0$. At the end we also take limit $\epsilon'\to 0$.

\end{proof}

\begin{lemma}\label{prop:PhiEproperties}
    Let $E\in\mathcal{F}_{\text{sub},k}$.
    Then
    \begin{equation}\label{eq:asympContle}
    \Phi(E)\left( \sqrt{p} \varphi \oplus \sqrt{1-p} \psi \right) \le p \Phi(E)(\varphi) + (1-p) \Phi(E)(\psi) + h(p)
\end{equation}
for any $p\in [0,1]$.
\end{lemma}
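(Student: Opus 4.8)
The plan is to prove the claimed subadditivity-with-entropy-correction by working at finite $n$ with the smoothed measure $E^\epsilon$, then regularizing. Write $\chi = \sqrt{p}\,\varphi \oplus \sqrt{1-p}\,\psi$. The key combinatorial fact is that $\chi^{\otimes n}$ decomposes, in the larger $k$-partite Hilbert space carrying the direct-sum structure, as a sum over binary strings $s \in \{0,1\}^n$ of terms proportional to $\bigoplus_s \big(\bigotimes_{i: s_i=0}\varphi \otimes \bigotimes_{i: s_i=1}\psi\big)$, with squared norms $p^{|s|_0}(1-p)^{|s|_1}$ distributed binomially. Concentrating on the typical weight $|s|_0 \approx pn$, standard large deviation estimates tell us that the total weight of atypical strings is at most $\epsilon$ (for any fixed slack, once $n$ is large), and all surviving summands are, after permuting tensor factors (which $E$ respects by local-isometry invariance), of the form $\varphi^{\otimes m} \otimes \psi^{\otimes(n-m)}$ with $m$ in a small window around $pn$.

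First I would construct a good approximant $\tilde\varphi$ to $\chi^{\otimes n}$: take near-optimal smoothing vectors $\varphi'$ for $\varphi^{\otimes m}$ (with $|\langle \varphi'|\varphi^{\otimes m}\rangle|^2 \ge 1-\epsilon'$, achieving $E(\varphi')$ close to $E^{\epsilon'}(\varphi^{\otimes m})$) and $\psi'$ for $\psi^{\otimes(n-m)}$, form the direct sum over the typical window $\bigoplus_{m \in W} (\text{coefficients})\, \varphi'_m \otimes \psi'_m$ renormalized, and check that its overlap with $\chi^{\otimes n}$ is at least $1 - O(\epsilon + \epsilon')$. Then apply the logarithmic boundedness axiom \cref{eq:boundedSum}: $E(\tilde\varphi) \le \max_{m \in W} E(\varphi'_m \otimes \psi'_m) + \log|W|$, and $\log|W| = o(n)$ since the typical window has polynomial width. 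Next use subadditivity of $E$ to split $E(\varphi'_m \otimes \psi'_m) \le E(\varphi'_m) + E(\psi'_m)$, bounding these by the smoothed quantities for $\varphi^{\otimes m}$ and $\psi^{\otimes(n-m)}$. Dividing by $n$, taking $\limsup_n$, and using that $m/n \to p$ on the window, the $E$-terms contribute $p\,\Phi(E)(\varphi) + (1-p)\,\Phi(E)(\psi)$ after the $\epsilon' \to 0$ limit. The missing piece $h(p)$ comes from the entropy of the binomial weights themselves: more carefully, one should not discard the direct-sum structure entirely but recognize that $\chi^{\otimes n} \approx \bigoplus_{m \in W} \sqrt{\binom{n}{m}p^m(1-p)^{n-m}}\,(\varphi^{\otimes m}\otimes \psi^{\otimes n-m})$, so a sharper application of \cref{eq:boundedSum} — or rather an argument analogous to the proof that $\mathcal{F}_k \subset \mathcal{F}_{\mathrm{sub},k}$, i.e. splitting each summand via its normalized constituents and picking up the Shannon entropy of the binomial distribution $\binom{n}{m}p^m(1-p)^{n-m} \approx 2^{nh(p)+o(n)}$ — yields exactly the additive $h(p)$ term after dividing by $n$.

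The main obstacle I anticipate is controlling the overlap of the finite-$n$ approximant with $\chi^{\otimes n}$ while simultaneously using the logarithmic-boundedness axiom, since \cref{eq:boundedSum} is stated for \emph{exact} direct sums of unit vectors, whereas truncating $\chi^{\otimes n}$ to the typical window perturbs both the norm and the direct-sum structure. The resolution is to absorb the truncation error into the smoothing parameter: one shows that for suitable $\epsilon$ depending on $n$ (or by a two-stage argument with slack $\epsilon$ and $\epsilon'$ as in \cref{lem:weaksuperadd}), the renormalized truncation is a legitimate competitor in the infimum defining $E^{\epsilon''}(\chi^{\otimes n})$ for some $\epsilon''$ that still tends to $0$. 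A secondary technical point is bookkeeping the interplay of the two limits ($\limsup_n$ then $\epsilon \to 0$) with the inner smoothing parameter $\epsilon'$; this should follow the same pattern as \cref{lem:weaksuperadd}, taking $\epsilon \to 0$ before $\epsilon' \to 0$. Everything else — the large-deviation bound on atypical weight, the arithmetic $m/n \to p$, the $o(n)$ bound on $\log$ of the window width — is routine.
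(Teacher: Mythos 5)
Your proposal follows essentially the same route as the paper's proof: decompose the $n$-th tensor power of the direct sum binomially, keep only a typical window of weights $m\approx pn$, build a competitor for the smoothing infimum from near-optimal approximants of $\varphi^{\otimes m}$ and $\psi^{\otimes(n-m)}$, verify its overlap, and then apply logarithmic boundedness \eqref{eq:boundedSum} together with subadditivity, with the $h(p)$ term arising from the exponentially many ($\binom{n}{m}\approx 2^{nh(p)}$) orthogonal direct summands at typical $m$ — precisely the ``sharper application'' you arrive at after correcting your initial $\log|W|=o(n)$ count (and note it is $\binom{n}{m}$, not $\binom{n}{m}p^m(1-p)^{n-m}$, that is $\approx 2^{nh(p)}$). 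Since this matches the paper's argument, including the order of limits and the treatment of the truncation error, the proposal is correct in substance.
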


\begin{proof}
Let $\psi=\sqrt{p}\psi_1\oplus \sqrt{1-p}\psi_2 \in \unitvectors(\Hilbert\oplus\mathcal{K})$ for given $0<p<1$, where $\psi_1\in\unitvectors(\Hilbert)$ and $\psi_2\in\unitvectors(\mathcal{K})$.
Its $n$-th tensor power can be decomposed into the binomial series 
\begin{equation}\label{eq:psitensorndecomp}
    \psi^{\otimes n}=\bigoplus_{m=1}^n \bigoplus_{i=1}^{n\choose m} \sqrt{p}^{\, m}\psi_1^{\otimes m}\otimes\sqrt{1-p}^{\, n-m}\psi_2^{\otimes n-m}.
\end{equation}
Note that the norm squares of the tensor summands form a binomial distribution with parameters $n, p$.
By the weak law of large numbers for any $\delta>0$ and $\epsilon>0$ one can choose $n$ large enough so that
\begin{equation}
\sum_{m=\lfloor n(p-\delta)\rfloor}^{\lfloor n(p+\delta)\rfloor} 
{n \choose m} p^{m} (1-p)^{n-m}
\ge \sqrt[4]{1-\epsilon}.
\end{equation}
Consider the vector
\begin{equation}\label{eq:varphitensorndecomp}
       \varphi\coloneqq\bigoplus_{m=\lfloor n(p-\delta)\rfloor}^{\lfloor n(p+\delta)\rfloor}  \varphi^{(m)}
       \coloneqq\bigoplus_{m=\lfloor n(p-\delta)\rfloor}^{\lfloor n(p+\delta)\rfloor} \bigoplus_{i=1}^{n\choose m}\sqrt{p}^{\, m}\varphi_{1}^{(m)}\otimes\sqrt{1-p}^{\, n-m}\varphi_{2}^{(n-m)},
\end{equation}
where $\varphi_1^{(m)}\in\unitvectors(\Hilbert^{\otimes m}), \varphi_2^{(n-m)}\in\unitvectors(\mathcal{K}^{\otimes n-m})$ are such that $\braket{\varphi_{1}^{(m)}}{\psi_1^{\otimes m}}\ge \sqrt[8]{1-\epsilon}$ and $\braket{\varphi_{2}^{(n-m)}}{\psi_2^{\otimes n-m}}\ge\sqrt[8]{1-\epsilon}$ (the phases of these unit vectors can always be chosen such that the inner products are non-negative). Note that $\norm{\varphi}\le 1$.
We write the bound
\begin{equation}
\begin{split}
\left\lvert\braket{\frac{\varphi}{\norm{\varphi}}}{\psi^{\otimes n}}\right\rvert^2
&=\frac{1}{\norm{\varphi}^2}\left\lvert\sum_{m=\lfloor n(p-\delta)\rfloor}^{\lfloor n(p+\delta)\rfloor} {n \choose m} p^{\, m}\braket{\varphi_{1}^{(m)}}{\psi_1^{\otimes m}} (1-p)^{\, n-m}\braket{\varphi_{2}^{(n-m)}}{\psi_2^{\otimes n-m}}\right\rvert^2\\
&\ge 
\left\lvert
\sqrt[4]{1-\epsilon}
\sum_{m=\lfloor n(p-\delta)\rfloor}^{\lfloor n(p+\delta)\rfloor}
{n \choose m} p^{\, m} (1-p)^{\, n-m}\right\rvert^2\\
&\ge
1-\epsilon
\end{split}
\end{equation}
to see that this choice of $\varphi$ indeed satisfies the conditions of the infimum.
By assumption (logarithmic boundedness) we have
\begin{equation}\label{eq:EasymptContMainUpperB}
\begin{split}
     E(\varphi)
    \le
     & 
    \max_{m\in [\lfloor n(p-\delta)\rfloor,\lfloor n(p+\delta)\rfloor]}
    \quad
     E\left(\varphi_{1}^{(m)}\otimes\varphi_{2}^{(n-m)}\right)
     + \log \left(\lfloor 2n\delta\rfloor+1\right)
     +\log{n \choose m}.
\end{split}
\end{equation}
Let $m_n$ be the maximizing value of $m$ for given $n$. 
By subadditivity and the bound ${n \choose m_n}\le 2^{nh(\frac{m_n}{n})}$ (\cite{csiszar2011information}) we have 
\begin{equation}
\begin{split}
     E(\varphi)
    \le
     & 
     E\left(\varphi_{1}^{(m_n)}\otimes\varphi_{2}^{(n-m_n)}\right)
     + \log \left(\lfloor 2n\delta\rfloor+1\right)
     +\log{n \choose m_n}\\
     \le&
     E\left(\varphi_{1}^{(m_n)}\right)
     + E\left(\varphi_{2}^{(n-m_n)}\right)
     +  \log \left(\lfloor 2n\delta\rfloor+1\right)
     +nh\left(\frac{m_n}{n}\right)\\
\end{split}
\end{equation}

Let $p_\delta\coloneqq\limsup_{n\to\infty} \frac{m_n}{n}$, which by definition satisfies $p-\delta\le p_\delta\le p+\delta$.
Taking the infimum of \cref{eq:EasymptContMainUpperB} only over vectors $\varphi$ of the form \cref{eq:varphitensorndecomp} with the corresponding conditions, we get an upper bound on the infimum in the smoothing. Then dividing by $n$ and taking the limsup $n\to\infty$, the term ${\textstyle \frac{1}{n}\log \lceil 2n\delta\rceil}$ vanishes, and we get
\begin{equation}\label{eq:asContmainupperbound}
\begin{split}
\limsup_{n\to\infty}
     \frac{1}{n} \inf_{\substack{\varphi \in \unitvectors((\mathcal{H}\oplus\mathcal{K})^{\otimes n})\\ \lvert\langle \varphi \rvert\psi^{\otimes n} \rangle| \geq 1-\epsilon}} E(\varphi)
    \le
     &
     h(p_\delta)+ 
    \limsup_{m\to\infty}\frac{p_\delta}{m}\inf_{\substack{\varphi_{1,m}\\
    \braket{\varphi_{1,m}}{\psi_1^{\otimes m}}\ge \sqrt[4]{1-\epsilon}
    }} E(\varphi_{1,m}) 
    \\&+ \limsup_{m\to\infty}\frac{1-p_\delta}{m}\inf_{\substack{\varphi_{2,m}\\
    \braket{\varphi_{2,m}}{\psi_2^{\otimes m}}\ge \sqrt[4]{1-\epsilon}
    }} E(\varphi_{2,m})\\
    \le
    &
     h(p_\delta)+ 
     p_\delta\limsup_{m\to\infty}\frac{1}{m}\inf_{\substack{\varphi_{1,m}\\
    \braket{\varphi_{1,m}}{\psi_1^{\otimes m}}\ge 1-\frac{1}{4}\epsilon
    }} E(\varphi_{1,m_n}) 
    \\&+ (1-p_\delta)\limsup_{m\to\infty}\frac{1}{m}\inf_{\substack{\varphi_{2,m}\\
    \braket{\varphi_{2,m}}{\psi_2^{\otimes m}}\ge 1-\frac{1}{4}\epsilon
    }} E(\varphi_{2,m}).
\end{split}
\end{equation}
In the last inequality we use $\sqrt[4]{1-\epsilon}\le 1-\frac{1}{4}\epsilon$.
Taking the limit $\delta\to 0$, we get $p_\delta
\to p$. Then $\epsilon\to 0$ gives the desired inequality.
\end{proof}

\begin{proposition}\label{lem:phiEsubweakmon}
    Let $E\in\mathcal{F}_{\text{sub},k}$. Then $\Phi(E)$ is logarithmically bounded \eqref{eq:boundedSum}, and weakly monotone, i.e. if there exists an LOCC protocol
\begin{equation*}
\vectorstate{\psi}
\overset{\text{LOCC}}{\longrightarrow} 
\sum_{x \in \mathcal{X}} P(x) 
\lvert \varphi_x \rangle \langle \varphi_x \rvert \otimes \lvert x \rangle \langle x \rvert,
\end{equation*}
where $P$ is a probability distribution over the possible values of the classical register $\mathcal{X}$, then we have 
\begin{equation*}
    E({\psi})\ge  P(x)E({\varphi_x})
\end{equation*}
for each $x\in\mathcal{X}$.
Furthermore if $\Phi(E)$ is additive, then it is also monotone on average \labelcref{it:monotone}.
\end{proposition}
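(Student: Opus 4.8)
This statement bundles three properties of $\Phi(E)$, which I would establish in the stated order.

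\textbf{Logarithmic boundedness.} Given \cref{prop:PhiEproperties}, this is a short induction on $l$. Writing $p_1=\norm{\psi_1}^2$ and grouping $\psi_1\oplus\dots\oplus\psi_l=\sqrt{p_1}\,\psi_1/\norm{\psi_1}\oplus\sqrt{1-p_1}\,(\psi_2\oplus\dots\oplus\psi_l)/\norm{\psi_2\oplus\dots\oplus\psi_l}$, one applies \cref{prop:PhiEproperties} once, the inductive hypothesis to the second summand, and the chain rule of the Shannon entropy, obtaining
\[
\Phi(E)(\psi_1\oplus\dots\oplus\psi_l)\le\sum_{i\in[l]}\norm{\psi_i}^2\,\Phi(E)(\psi_i/\norm{\psi_i})+\entropy(\{\norm{\psi_i}^2\}_{i\in[l]})\le\max_{i\in[l]}\Phi(E)(\psi_i/\norm{\psi_i})+\log l,
\]
exactly mirroring the computation already carried out for the elements of $\mathcal{F}_k$.

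\textbf{Weak monotonicity} (i.e.\ $\Phi(E)(\psi)\ge P(x)\Phi(E)(\varphi_x)$ for every $x$). Here I would imitate the proof of \cref{lem:weaksuperadd}. Fix $x_0\in\mathcal{X}$ and a small $\delta>0$, and set $M_n:=\floor{n(P(x_0)-\delta)}$. Running the given protocol independently on each of $n$ copies of $\psi$, reading the registers, tracing out every output system whose recorded outcome is not $x_0$, and, on the event that at least $M_n$ copies of $\varphi_{x_0}$ survive, discarding all but $M_n$ of them (and mapping to a trivial state otherwise), defines an LOCC channel $\Lambda_n'$ under which $\vectorstate{\psi^{\otimes n}}$ becomes a conditionally pure state that is $\vectorstate{\varphi_{x_0}^{\otimes M_n}}$ with probability $w_n=\Pr[\mathrm{Bin}(n,P(x_0))\ge M_n]$ and a trivial state otherwise, with $w_n\to1$ by the weak law of large numbers since $M_n/n\to P(x_0)-\delta<P(x_0)$ (reordering tensor factors is a local isometry, under which $E$, hence $E^{\epsilon'}$, is invariant by monotonicity on average). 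Now take any $\varphi\in\unitvectors(\Hilbert^{\otimes n})$ with $\lvert\braket{\varphi}{\psi^{\otimes n}}\rvert^2\ge1-\epsilon$ and apply $\Lambda_n'$ to $\vectorstate{\varphi}$: by \cref{lem:LOCCoutcomeproblower} together with the data-processing bound on how much $\Lambda_n'$ can change the probability of the classical event recorded in the register, the branches that both retain exactly $M_n$ copies and are $\epsilon'$-close to $\varphi_{x_0}^{\otimes M_n}$ carry total weight $w_n'\ge w_n-c\sqrt{\epsilon}(1+1/\sqrt{\epsilon'})$; dropping the other non-negative terms in the monotonicity-on-average inequality for $E$ and bounding $E(\chi)\ge E^{\epsilon'}(\varphi_{x_0}^{\otimes M_n})$ on each of those branches gives $E(\varphi)\ge w_n'\,E^{\epsilon'}(\varphi_{x_0}^{\otimes M_n})$, hence $E^\epsilon(\psi^{\otimes n})\ge w_n'\,E^{\epsilon'}(\varphi_{x_0}^{\otimes M_n})$. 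Dividing by $n$, using $M_n/n\to P(x_0)-\delta$ and that the non-decreasing, unit-step, divergent sequence $(M_n)_n$ takes every large integer value (so $\limsup_n(1/M_n)E^{\epsilon'}(\varphi_{x_0}^{\otimes M_n})\ge\limsup_m(1/m)E^{\epsilon'}(\varphi_{x_0}^{\otimes m})$), and then letting $\epsilon\to0$, then $\epsilon'\to0$, then $\delta\to0$, yields $\Phi(E)(\psi)\ge P(x_0)\Phi(E)(\varphi_{x_0})$.

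\textbf{Monotonicity on average when $\Phi(E)$ is additive.} Suppose $\vectorstate{\psi}\LOCC\sum_x P(x)\vectorstate{\varphi_x}\otimes\vectorstate{x}$. Fix $\delta>0$, run the protocol on each of $L$ copies of $\psi$, and then, on the event that at least $\floor{L(P(x)-\delta)}$ copies of each $\varphi_x$ were produced, discard the surplus and reorder (and map to a trivial state otherwise); call the resulting LOCC channel $\Theta_L$. Under $\Theta_L$, $\vectorstate{\psi^{\otimes L}}$ becomes a conditionally pure state that is $\vectorstate{\Omega_L}$, with $\Omega_L:=\bigotimes_x\varphi_x^{\otimes\floor{L(P(x)-\delta)}}$, with probability $v_L\to1$ (weak law of large numbers and a union bound over the finite set $\mathcal{X}$), and a trivial state otherwise. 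Feeding this into the weak monotonicity just established gives $\Phi(E)(\psi^{\otimes L})\ge v_L\Phi(E)(\Omega_L)$; applying weak additivity (\cref{prop:subadditivityKeeping}) on the left and the assumed additivity of $\Phi(E)$ on the right turns this into $L\,\Phi(E)(\psi)\ge v_L\sum_x\floor{L(P(x)-\delta)}\,\Phi(E)(\varphi_x)$. Dividing by $L$ and letting $L\to\infty$, then $\delta\to0$, produces $\Phi(E)(\psi)\ge\sum_x P(x)\Phi(E)(\varphi_x)$, i.e.\ \labelcref{it:monotone} for $\Phi(E)$.

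The step needing most care is weak monotonicity: the parameters $\epsilon,\epsilon',\delta,n$ must be driven to their limits in the correct nesting, the weight $w_n'$ must be controlled simultaneously via \cref{lem:LOCCoutcomeproblower} and via data processing for the classical register, and the $\limsup$ over $n$ must be transferred to a $\limsup$ over the number $M_n$ of retained copies. The remaining ingredients (that partial traces and intra-party permutations are LOCC, that $E$ and $E^\epsilon$ are local-isometry invariant, and that $w_n,v_L\to1$) are routine.
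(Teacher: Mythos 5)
Your proposal is correct, and the logarithmic-boundedness part is verbatim the paper's own argument (successive use of \cref{prop:PhiEproperties} plus the chain rule). For the two monotonicity statements your decomposition differs mildly from the paper's. The paper runs $\Lambda^{\otimes n}$ on an approximating $\varphi$, keeps the typical strings containing every letter $x$ at least $np_x\approx n(P(x)-\delta)$ times, measures out the surplus copies, and extracts the single inequality $\Phi(E)(\psi)\ge\tfrac1n\Phi(E)\bigl(\bigotimes_x\varphi_x^{\otimes np_x}\bigr)$; weak monotonicity then comes from \cref{lem:weaksuperadd} together with weak additivity (\cref{prop:subadditivityKeeping}), and monotonicity on average from additivity of $\Phi(E)$. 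You instead keep only the single letter $x_0$, which yields weak monotonicity directly without invoking \cref{lem:weaksuperadd}, and then recover monotonicity on average by a second typicality construction applied at the level of $\Phi(E)$: weak monotonicity of $\Phi(E)$ for the $L$-copy protocol producing $\bigotimes_x\varphi_x^{\otimes\floor{L(P(x)-\delta)}}$ with probability $v_L\to 1$, weak additivity on the left and the assumed additivity on the right. Both routes use the same engine (law of large numbers, \cref{lem:LOCCoutcomeproblower} with a union bound, a discard step, monotonicity on average of $E$, and the nesting $n\to\infty$, then $\epsilon\to0$, $\epsilon'\to0$, $\delta\to0$); your bootstrap buys a cleaner separation of the two conclusions at the price of running the typicality argument twice, and your transfer of the $\limsup$ from $n$ to the unit-step divergent sequence $M_n$ is sound. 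One wording should be adjusted: the unwanted copies must be discarded by a local measurement whose outcome is recorded in the classical register (this is exactly the paper's $\Lambda_{\text{discard}}$), not by a literal partial trace, because on the perturbed input $\varphi$ the branches need not be product states, and a genuine partial trace would destroy the conditional purity \eqref{eq:conditionallyPure} that the monotonicity-on-average axiom \labelcref{it:monotone} presupposes; with that reading your channels $\Lambda_n'$ and $\Theta_L$ are legitimate and the argument goes through.
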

\begin{proof}

 We write
       \begin{equation}
    \begin{split}
        \Phi(E)\left(\psi_1\oplus\dots\oplus\psi_l\right) &\le
        \sum_{i\in[l]} \norm{\psi_i}^2 \Phi(E)\left(\frac{\psi_i}{\norm{\psi_i}}\right)+ \entropy(\{\norm{\psi_i}^2\}_{i\in[l]})\\
        &\le
        \max_{i\in[l]}\Phi(E)\left(\frac{\psi_i}{\norm{\psi_i}}\right) + \log l
        \end{split}
    \end{equation}
    by successively using \cref{prop:PhiEproperties} along with the chain rule of Shannon entropy to get the first inequality. 
    
    We continue with monotonicity (both).
    Assume that there is an LOCC channel {$\Lambda:\boundeds(\Hilbert_1)\otimes\dots\otimes\boundeds(\Hilbert_k)\mapsto\boundeds(\mathcal{K}_1)\otimes\dots\otimes\boundeds(\mathcal{K}_k)\otimes\Diag(\complexes^{\mathcal{X}})$} such that
\begin{equation}
\vectorstate{\psi}
\overset{\Lambda}{\longrightarrow} 
\sum_{x \in \mathcal{X}} P(x) 
\vectorstate{\psi_x}  \otimes \vectorstate{x},
\end{equation}
where $\mathcal{X}$ is a finite set encoding the possible values of the classical register, and the $P(x)\ge 0$ values form a probability distribution (possibly featuring zero probabilities for some $x\in\mathcal{X}$). 
In the following we use the notation $\Hilbert\coloneqq\Hilbert_1\otimes\ldots\otimes\Hilbert_k$ and $\mathcal{K}\coloneqq\mathcal{K}_1\otimes\ldots\otimes\mathcal{K}_k$.
The previous LOCC transformation also defines the LOCC transformation $\Lambda^{\otimes n}$ on $n$-copy states as
\begin{equation}\label{eq:psitensornlambdaprotocol}
\begin{split}
    \vectorstate{\psi}^{\otimes n}\overset{\Lambda^{\otimes n}}{\longrightarrow} 
    &\sum_{\overline{x} \in \mathcal{X}^{n}}
{P^{\otimes n}(\overline{x})} \,
 \vectorstate{\psi_{\overline{x}}}\otimes\vectorstate{\overline{x}}\\=
 \sum_{x_1 \in \mathcal{X}}
 \ldots
 &\sum_{x_n \in \mathcal{X}}
 {
P(x_1)\dots P(x_{n})} \,
\vectorstate{\psi_{x_1}}\otimes\ldots\otimes\vectorstate{\psi_{x_{n}}}\otimes\vectorstate{x_1,\ldots, x_n},   
 \end{split}
\end{equation}
where $P^{\otimes n}(\overline{x})$ is the probability that the LOCC protocol $\Lambda^{\otimes n}$ creates the output state $\psi_{\overline{x}}= \psi_{x_1}\otimes\dots\otimes\psi_{x_{n}}\in\unitvectors\left((\mathcal{K}_1\otimes\dots\otimes\mathcal{K}_k)^{\otimes n}\right)$ corresponding to the $n$-string $\overline{x}\coloneqq(x_1,\dots, x_{n})$ of one-copy register values.
By the weak law of large numbers for any $\epsilon,\delta>0$ we can choose $n$ large enough, so that the total probability of strings $\overline{x}$ not containing every letter $x$ at least $\lfloor n\left(P(x)-\delta\right)\rfloor$ times is upper bounded by $\epsilon$.
Also, for a large enough $n$ for each $x$ there exist $p_x\ge P(x)-\delta$ such that $np_x$ is an integer.

Generally, any $n$-copy of the pure state $\varphi\in\left(\Hilbert_1\otimes\dots\otimes\Hilbert_k\right)^{\otimes n}$ is transformed by $\Lambda^{\otimes n}$ in the same manner (see \cref{sub:LOCC}), i.e.
\begin{equation}
    \vectorstate{\varphi}
\overset{\Lambda^{\otimes n}}{\longrightarrow} 
\sum_{\overline{x} \in \mathcal{X}^{n}} Q(\overline{x}) 
\vectorstate{\varphi_{\overline{x}}}  \otimes \vectorstate{\overline{x}},
\end{equation}
where again $Q(\overline{x})\ge0$, $\sum_{\overline{x}\in\mathcal{X}^n}Q(\overline{x})=1$, and $\varphi_{\overline{x}}$ are unit vectors. The probability distribution $Q$ may have a different support as $P$. Assuming $\lvert\langle \varphi \rvert\psi^{\otimes n} \rangle| \geq 1-\epsilon$, by \cref{lem:LOCCoutcomeproblower} and the union bound (applied to intersections) the total probability of having an outcome $\overline{x}\in\mathcal{X}$ for which $\text{T}(\vectorstate{\varphi_{\overline{x}}},\vectorstate{\psi_{\overline{x}}}) \le \sqrt{\epsilon'}$ and which also contains each letter $x$ at least $np_x$ times at the same time is
\begin{equation}
    \sum_{\substack{\overline{x}\in\mathcal{X}_{\text{accept}}\\\text{T}(\vectorstate{\varphi_{\overline{x}}},\vectorstate{\psi_{\overline{x}}}) \le \sqrt{\epsilon'}} }
     Q(\overline{x})
     \ge  
     1 -2\sqrt{\epsilon}\left(1+\frac{1}{\sqrt{\epsilon'}}\right) -\epsilon,
\end{equation}
where $\mathcal{X}_{\text{accept}}\subseteq \mathcal{X}^n$ denotes those strings which contain each letter at least $np_x$ times.

By the previous inequality, the fact that $E(\varphi_{\overline{x}})\ge0$, and that $E$ is monotone on average, we have
\begin{equation}\label{eq:upperBoundByE}
\begin{split}
      E(\varphi)&\ge
      \sum_{\overline{x}\in\mathcal{X}} 
     Q(\overline{x})
     E(\varphi_{\overline{x}})\\
     &\ge
      \sum_{\substack{\overline{x}\in\mathcal{X}_{\text{accept}}\\\text{T}(\vectorstate{\varphi_{\overline{x}}},\vectorstate{\psi_{\overline{x}}}) \le \sqrt{\epsilon'}} } 
     Q(\overline{x})
     E(\varphi_{\overline{x}})\\&\ge
     \sum_{\substack{\overline{x}\in\mathcal{X}_{\text{accept}}\\\text{T}\\      \left(\vectorstate{\varphi_{\overline{x}}},\vectorstate{\psi_{\overline{x}}}\right) \le \sqrt{\epsilon'}} }
     Q(\overline{x})
     \inf_{\substack{\varphi' \in \unitvectors\left(\mathcal{K}^{\otimes n}\right)\\ \text{T}(\vectorstate{\varphi'},\vectorstate{\psi_{\overline{x}}}) \le \sqrt{\epsilon'}}}
     E(\varphi')\\&\ge
     \sum_{\substack{\overline{x}\in\mathcal{X}_{\text{accept}}\\\text{T}\left(\vectorstate{\varphi_{\overline{x}}},\vectorstate{\psi_{\overline{x}}}\right) \le \sqrt{\epsilon'}} }
     Q(\overline{x})
     \inf_{\substack{\varphi' \in \unitvectors\left(\mathcal{K}^{\otimes n}\right)\\ \text{T}\left(\Lambda_{\text{discard}}(\vectorstate{\varphi'}\right),\bigotimes_{x\in\mathcal{X}}\vectorstate{\psi_x}^{\otimes np_x}) \le \sqrt{\epsilon'}}}
     E(\varphi')\\&\ge
      \left(1 -2\sqrt{\epsilon}\left(1+\frac{1}{\sqrt{\epsilon'}}\right) -\epsilon\right)\quad 
     \inf_{\substack{\varphi' \in \unitvectors\left(\mathcal{K}^{\otimes n}\right)\\ \text{T}\left(\vectorstate{\varphi'},\bigotimes_{x\in\mathcal{X}}\vectorstate{\psi_x}^{\otimes np_x}\right) \le \sqrt{\epsilon'}}}
     E(\varphi')
\end{split}
\end{equation}
In the fourth inequality we applied a discarding channel $\Lambda_{\text{discard}}$ which performs measurement (therefore discarding) the extra copies, and keeps only $np_x$ copies of each letter $x$. There we also used the data processing inequality, namely that
\begin{equation*}
    \text{T}\left(\Lambda_{\text{discard}}(\vectorstate{\varphi'}\right),\bigotimes_{x\in\mathcal{X}}\vectorstate{\psi_x}^{\otimes np_x})\le \text{T}\left(\vectorstate{\varphi'},\vectorstate{\psi_{\overline{x}}}\right) \le \sqrt{\epsilon'}.
\end{equation*}
Then what we have is
\begin{equation}
\begin{split}
    \lim_{\epsilon'\to 0}\lim_{\epsilon\to0}\limsup_{n\to\infty}\frac{1}{n}
    \inf_{\substack{\varphi \in \unitvectors(\mathcal{H}^{\otimes n})\\ \left\lvert \braket{\varphi}{\psi^{\otimes n}}\right\rvert^2 \ge 1- \epsilon}}
    E(\varphi)&\ge
    \lim_{\epsilon'\to 0}
    \limsup_{n\to\infty}\frac{1}{n}
    \inf_{\substack{\varphi \\ \left\lvert \braket{\varphi}{\bigotimes_{x\in\mathcal{X}}\psi_x^{\otimes np_x}}\right\rvert^2 \ge 1- \epsilon'}}
     E(\varphi)\\
    &
    \ge
    \lim_{\epsilon'\to 0}
    \limsup_{k\to\infty}\frac{1}{kn}
    \inf_{\substack{\varphi \\ \left\lvert \braket{\varphi}{\bigotimes_{x\in\mathcal{X}}\psi_x^{\otimes knp_x}}\right\rvert^2 \ge 1- \epsilon'}}
     E(\varphi)
     \\
     &
     =\frac{1}{n}
     \Phi(E)\left(\bigotimes_{x\in\mathcal{X}}\psi_x^{\otimes np_x}\right)
     .
\end{split}
\end{equation}

If $\Phi(E)$ is additive we get the monotonicity on average by decomposing the right hand side and taking $\delta\to0$ (and therefore $p_x\to P(x)$), otherwise we get weak monotonicity the same way by \cref{lem:weaksuperadd} and the weak additivity of $\Phi(E)$ (\cref{prop:subadditivityKeeping}). 

\end{proof}

\begin{definition}\label{def:asymptcont}
    A function 
    \begin{equation}
        f : \bigcup_{d_1, \ldots, d_k=1}^\infty \unitvectors(\mathbb{C}^{d_1} \otimes \cdots \otimes \mathbb{C}^{d_k}) \to \mathbb{R}
    \end{equation}
is {\it asymptotically continuous} if
\begin{equation}
    \frac{f}{1 + \log \dim \mathcal{H}}
\end{equation}
is uniformly continuous.
\end{definition}

\begin{remark}\label{rem:asymptoticCont}
    In \cite[Theorem 4.2.]{vrana2022asymptotic} it is shown that for $E\in\mathcal{F}_{\text{sub},k}$ the inequality 
    \begin{equation}\label{eq:EsumUpperbyHalpha2}
    E\left( \sqrt{p} \varphi \oplus \sqrt{1-p} \psi \right) \le p E(\varphi) + (1-p) E(\psi) + h(p)
\end{equation}
implies asymptotic continuity and the continuity bound
\begin{equation}\label{eq:contBound}
|E(\varphi) - E(\psi)| \leq a\left(\sqrt{1-\lvert\braket{\varphi}{\psi}\rvert^2}\right) \log \dim \mathcal{H} + b\left(\sqrt{1-\lvert\braket{\varphi}{\psi}\rvert^2}\right),
\end{equation}
where
\begin{equation}
a(\delta) = \frac{\left(1 + \delta^{\frac{2}{k+1}}\right)^{k+1} - 1 + \delta^2}{1 - \delta^2}
\end{equation}
and
\begin{equation}
b(\delta) = \frac{\left(1 + \delta^{\frac{2}{k+1}}\right)^{k+1}}{1 - \delta^2} \, h\left(\left(1 + \delta^{\frac{2}{k+1}}\right)^{-1}\right).
\end{equation}
In fact, they also assume that $E$ is additive, normalized, monotone on average (instead of weak monotonicity) and that \cref{eq:EsumUpperbyHalpha2} is satisfied with equality, but in the proof of $(iii)\implies(iv)$ and $(iv)\implies(i)$ we can relax these assumptions\footnote{In particular, monotonicity on average and \cref{eq:EsumUpperbyHalpha2} with equality, both of which are used only in \cite[Equation (58)]{vrana2022asymptotic}, where weak monotonicity and the inequality mentioned are sufficient.}. Combining these facts with \cref{lem:phiEsubweakmon} and \cref{prop:PhiEproperties} we find that for $E\in\mathcal{F}_{\text{sub},k}$ the functional $\Phi(E)$ is asymptotically continuous. 

\end{remark}

\begin{theorem}[AEP]\label{thm:AEP}
    The map $\Phi$ (\cref{def:smoothing}) acts idempotently on the set $\mathcal{F}_{\text{sub},k}$, i.e. for $E\in\mathcal{F}_{\text{sub},k}$ we have $\Phi(\Phi(E))=\Phi(E)$, and its image consists of subadditive, asymptotically continuous, weakly additive and weakly monotone functionals. Furthermore, if $\Phi(E)$ is additive, i.e. $\Phi(E)(\psi_1\otimes\psi_2)= \Phi(E)(\psi_1)+\Phi(E)(\psi_2)$, then it is also monotone on average.
\end{theorem}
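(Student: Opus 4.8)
The plan is to split the statement into a routine part --- the list of properties enjoyed by functionals in the image of $\Phi$ --- and the genuinely new part, the idempotency $\Phi(\Phi(E))=\Phi(E)$. The routine part requires no new work. For $E\in\mathcal{F}_{\text{sub},k}$, \cref{prop:subadditivityKeeping} already gives that $\Phi(E)$ is weakly additive, and subadditive since $E$ is; \cref{lem:phiEsubweakmon} gives logarithmic boundedness, weak monotonicity, and the implication that $\Phi(E)$ is monotone on average whenever it is additive; and \cref{rem:asymptoticCont} gives asymptotic continuity. So I would open the proof by citing these, and then concentrate on $\Phi(\Phi(E))=\Phi(E)$, abbreviating $F\coloneqq\Phi(E)$ and fixing an arbitrary non-zero $k$-partite unit vector $\psi\in\unitvectors(\mathcal{H})$. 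It suffices to prove $\Phi(F)(\psi)\le F(\psi)$ and $\Phi(F)(\psi)\ge F(\psi)$.

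The first inequality is immediate: $\psi^{\otimes n}$ is itself admissible in the infimum defining $F^\epsilon(\psi^{\otimes n})$, so $F^\epsilon(\psi^{\otimes n})\le F(\psi^{\otimes n})$, and weak additivity of $F$ (\cref{prop:subadditivityKeeping}) rewrites the right-hand side as $nF(\psi)$; dividing by $n$ and passing to $\limsup_{n\to\infty}$ and then $\epsilon\to0$ in \eqref{eq:smoothedSpec2} yields $\Phi(F)(\psi)\le F(\psi)$.

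The substantive direction is $\Phi(F)(\psi)\ge F(\psi)$, for which I would invoke the asymptotic continuity of $F$ (\cref{rem:asymptoticCont}) in the concrete form of the continuity bound \eqref{eq:contBound}, whose coefficients $a,b$ are continuous, non-decreasing near $0$, and satisfy $a(0)=b(0)=0$. For any $\varphi\in\unitvectors(\mathcal{H}^{\otimes n})$ with $\lvert\braket{\varphi}{\psi^{\otimes n}}\rvert^2\ge 1-\epsilon$, applying the bound \eqref{eq:contBound} to the pair $\varphi,\psi^{\otimes n}$ inside $\mathcal{H}^{\otimes n}$, whose dimension is $(\dim\mathcal{H})^n$, and using weak additivity once more, gives
\begin{equation*}
    F(\varphi)\ \ge\ nF(\psi)-a\!\left(\sqrt{\epsilon}\right)n\log\dim\mathcal{H}-b\!\left(\sqrt{\epsilon}\right).
\end{equation*}
Taking the infimum over such $\varphi$, dividing by $n$ and letting $n\to\infty$ removes the $b$-term, and then $\epsilon\to0$ removes the $a$-term because $a(0)=0$; together with the previous paragraph this gives $\Phi(\Phi(E))=\Phi(E)$.

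I do not expect a genuine obstacle: all the real content has been front-loaded into \cref{prop:subadditivityKeeping,lem:weaksuperadd,prop:PhiEproperties,lem:phiEsubweakmon} and \cref{rem:asymptoticCont}, and idempotency is a short two-sided estimate once weak additivity and asymptotic continuity of $\Phi(E)$ are in hand. The only subtlety worth care is to apply the continuity bound in the space $\mathcal{H}^{\otimes n}$ rather than in $\mathcal{H}$, so that its leading term grows linearly in $n$ (and hence survives the normalization by $1/n$ in \eqref{eq:smoothedSpec2}) while still vanishing as $\epsilon\to0$; equivalently one may phrase this step directly through uniform continuity of $F/(1+\log\dim\mathcal{H})$, which sidesteps having to check the monotonicity of $a$ and $b$.
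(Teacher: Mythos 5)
Your proposal is correct and follows essentially the same route as the paper: the $\le$ direction is immediate from the definition, the properties of the image are cited from the same lemmas (\cref{prop:subadditivityKeeping}, \cref{lem:phiEsubweakmon}, \cref{rem:asymptoticCont}), and the $\ge$ direction combines weak additivity of $\Phi(E)$ with its asymptotic continuity so that the error term scales like $n\log\dim\mathcal{H}$ times a modulus vanishing as $\epsilon\to 0$. The only cosmetic difference is that you use the explicit continuity bound \eqref{eq:contBound} with coefficients $a,b$, whereas the paper phrases the same estimate through an abstract modulus $c(\epsilon)$ multiplying $(1+n\log\dim\mathcal{H})$.
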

\begin{proof}
 The inequality $\Phi(\Phi(E))\le\Phi(E)$ is trivial by \cref{def:smoothing}.
 For the converse we use that by \cref{rem:asymptoticCont} $E'\coloneqq\Phi(E)$ is asymptotically continuous and that by \cref{prop:subadditivityKeeping} it is subadditive and weakly additive. Any asymptotically continuous and weakly additive entanglement measure $E'$ satisfies:
 \begin{equation}\label{eq:idempotentE}
 \begin{split}
     \Phi(E')(\psi)&\ge 
     \lim_{\epsilon\to0}\limsup_{n \to \infty}\bigg( \frac{1}{n}
        \inf_{\substack{\varphi \in \unitvectors(\mathcal{H}^{\otimes n})\\ \lvert\langle \varphi \rvert\psi^{\otimes n} \rangle|^2 \geq 1-\epsilon}} E'(\psi^{\otimes n}) \\&\quad- 
        (1+n\log\dim\Hilbert)c(\epsilon)\bigg)
        \\
        &=
        \lim_{\epsilon\to0}\big(
         E'(\psi) - 
          c(\epsilon)\log\dim\mathcal{H}\big) \\
        &=E'(\psi),
\end{split}
 \end{equation}
where $c(\epsilon)>0$ is a positive valued function of $\epsilon>0$ such that $\lim_{\epsilon\to 0}c(\epsilon)=0$.
The fact that the additive elements of the image are monotone on average follows from \cref{lem:phiEsubweakmon}.
\end{proof}

\begin{corollary}
    Any entanglement measure $E$ in the image of $\Phi$ is monotone under approximate asymptotic transformations, i.e. if $\psi^{\otimes n}\in\Hilbert^{\otimes n}$ can be transformed into $\varphi_n$ with vanishing probability of failure, where 
    \begin{equation}
    \epsilon_n\coloneqq 1-\lvert\braket{\varphi_n}{\varphi^{\otimes n}}\rvert^2    
    \end{equation}
    vanishes in the asymptotic limit, then $E(\psi)\ge E(\varphi)$.
\end{corollary}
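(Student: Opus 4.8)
The plan is to push the hypothesised transformation through the structural properties of $E$ established above. By \cref{thm:AEP} the functional $E$ is weakly additive and weakly monotone, and by \cref{rem:asymptoticCont} it obeys the continuity bound \cref{eq:contBound}. Write $\mathcal{K}$ for the local Hilbert space underlying $\varphi$, so that $\varphi_n$ and $\varphi^{\otimes n}$ are unit vectors in $\mathcal{K}^{\otimes n}$, and fix $n$. Putting the transformation into the conditionally-pure form \cref{eq:conditionallyPure} and coarse-graining the classical register so that the successful branches are merged into a single value (classical post-processing, hence still LOCC), we may assume the output is $\sum_{x}P_n(x)\vectorstate{\psi_{n,x}}\otimes\vectorstate{x}$ with a distinguished value $x_0$ satisfying $P_n(x_0)=1-\delta_n$ and $\psi_{n,x_0}=\varphi_n$, where $\delta_n\to 0$.

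Applying weak monotonicity to the branch $x_0$ and then weak additivity on the left-hand side gives $nE(\psi)=E(\psi^{\otimes n})\ge(1-\delta_n)E(\varphi_n)$. The continuity bound \cref{eq:contBound}, combined with $\T(\vectorstate{\varphi_n},\vectorstate{\varphi^{\otimes n}})=\sqrt{\epsilon_n}$ (\cref{eq:tracedistofpure}), $\log\dim\mathcal{K}^{\otimes n}=n\log\dim\mathcal{K}$, and weak additivity applied to $\varphi$, yields $E(\varphi_n)\ge nE(\varphi)-n\,a(\sqrt{\epsilon_n})\log\dim\mathcal{K}-b(\sqrt{\epsilon_n})$. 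Substituting, dividing by $n$, and letting $n\to\infty$ finishes the proof:
\begin{equation*}
E(\psi)\ge(1-\delta_n)\Bigl(E(\varphi)-a(\sqrt{\epsilon_n})\log\dim\mathcal{K}-\tfrac1n\,b(\sqrt{\epsilon_n})\Bigr)\xrightarrow{\ n\to\infty\ }E(\varphi),
\end{equation*}
because $\delta_n\to 0$ and $\epsilon_n\to 0$ forces $a(\sqrt{\epsilon_n})\to a(0)=0$ and $b(\sqrt{\epsilon_n})\to b(0)=0$.

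Apart from the dimension bookkeeping and the elementary check from the formulas in \cref{rem:asymptoticCont} that $a$ and $b$ vanish at $0$, the argument is immediate from the already-established results. The only step deserving a little care is the reduction to a single success outcome: weak monotonicity delivers only the per-branch inequality $E(\psi^{\otimes n})\ge P_n(x)E(\psi_{n,x})$, so all the $\varphi_n$-producing branches must first be coarse-grained into one register value so that their combined weight $1-\delta_n$ appears as a single coefficient. I do not anticipate any genuine obstacle beyond this.
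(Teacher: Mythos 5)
Your proposal is correct and follows essentially the same route as the paper's proof: weak additivity gives $nE(\psi)=E(\psi^{\otimes n})$, weak monotonicity applied to the success branch gives $E(\psi^{\otimes n})\ge P_n E(\varphi_n)$, and the continuity bound \cref{eq:contBound} from \cref{rem:asymptoticCont} converts $E(\varphi_n)$ into $nE(\varphi)$ up to terms vanishing as $\epsilon_n\to 0$, after which one divides by $n$ and takes the limit. Your extra care about merging success branches and the explicit check that $a(0)=b(0)=0$ are fine refinements of details the paper leaves implicit, not a different argument.
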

\begin{proof}
    By the weak monotonicity, weak additivity and asymptotic continuity we have 
    \begin{equation}
    \begin{split}
       nE(\psi) &= E(\psi^{\otimes n})\\ &\ge P_n E(\varphi_n) \\&\ge n P_n  E(\varphi)-a\left(\sqrt{\epsilon_n}\right) n\log \dim \Hilbert + b\left(\sqrt{\epsilon_n}\right)
    \end{split}
    \end{equation}
    where $P_n\ge 0$ is such that $\lim_{n\to\infty}P_n=1$ and $a, b$ are vanishing in $\epsilon_n\to 0$. Dividing this by $n$ and taking the limit $n\to\infty$ we get $E(\psi)\ge E(\varphi)$.
\end{proof}

\begin{corollary}\label{cor:smoothliminF}
    Let $E\in\mathcal{F}_{\text{sub},k}$ and assume that $\Phi(E)$ is additive, i.e. $\Phi(E)(\psi_1\otimes\psi_2)= \Phi(E)(\psi_1)+\Phi(E)(\psi_2)$. Then $\frac{\Phi(E)}{\Phi(\GHZ)}\in\mathcal{F}_k$ and the inequality \labelcref{eq:asympContle} holds with equality for this normalized entanglement measure.
\end{corollary}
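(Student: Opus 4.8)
The plan is to verify the defining axioms (F\ref{it:norm})–(F\ref{it:asymptoticCont}) of $\mathcal{F}_k$ for the normalized functional $\Phi(E)/\Phi(\GHZ)$, drawing on the properties of $\Phi(E)$ already established in Theorem \ref{thm:AEP}, and then to upgrade the asymptotic continuity inequality \eqref{eq:asympContle} to an equality.

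First I would check that $\Phi(\GHZ)>0$, so that the normalization makes sense. Here $\GHZ$ should be read as the functional $E$ evaluated along the family of GHZ states; since $E\in\mathcal{F}_{\text{sub},k}$ and $E$ is monotone on average, one shows $\Phi(E)$ is non-trivial on GHZ states using logarithmic boundedness \eqref{eq:boundedSum} applied to a GHZ state written as a tensor sum of product vectors, which forces a nonzero value (for instance, the two-level GHZ state dominates a tensor sum giving $\Phi(E)$ at least some positive constant). Granting this, (F\ref{it:norm}) holds by construction. For (F\ref{it:additive}): by Theorem \ref{thm:AEP} the image of $\Phi$ consists of weakly additive functionals, and we are \emph{assuming} $\Phi(E)$ is additive; dividing by the constant $\Phi(\GHZ)$ preserves additivity, so (F\ref{it:additive}) holds. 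For (F\ref{it:monotone}): Theorem \ref{thm:AEP} says that an additive element of the image of $\Phi$ is monotone on average, and again this property is unaffected by rescaling by a positive constant. Invariance under local isometries is inherited from $E$ through the definition of $\Phi$ (local isometries preserve the fidelity constraint in \eqref{eq:smoothedSpec} and the tensor powers), so $\Phi(E)/\Phi(\GHZ)$ satisfies all the isometry-invariance and normalization bookkeeping as well.

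For (F\ref{it:asymptoticCont}), Remark \ref{rem:asymptoticCont} already gives asymptotic continuity of $\Phi(E)$, hence of $\Phi(E)/\Phi(\GHZ)$. The remaining task — and the only genuinely new content of the corollary — is to show that \eqref{eq:asympContle} holds with \emph{equality} for the normalized functional, equivalently (by the equivalence recorded in (F\ref{it:asymptoticCont})) that the reverse inequality
\begin{equation*}
\Phi(E)\!\left(\sqrt{p}\,\varphi\oplus\sqrt{1-p}\,\psi\right)\ \ge\ p\,\Phi(E)(\varphi)+(1-p)\,\Phi(E)(\psi)+h(p)\,\Phi(\GHZ)
\end{equation*}
also holds. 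Once $\Phi(E)/\Phi(\GHZ)$ is known to be normalized, additive, monotone on average and asymptotically continuous, it lies in $\mathcal{F}_k$, and then the equivalence stated in axiom (F\ref{it:asymptoticCont}) — ``if the other axioms are satisfied, asymptotic continuity is equivalent to \eqref{eq:asympContle} with equality'' — delivers the equality immediately. So the logical order is: establish the four algebraic/monotonicity axioms first, conclude membership in $\mathcal{F}_k$, and only then read off the equality as a consequence of that membership together with the cited equivalence from \cite{vrana2022asymptotic}.

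The main obstacle I anticipate is the positivity $\Phi(\GHZ)>0$: one must be careful that the smoothing does not collapse $E$ to zero on the GHZ family. I would handle this by exhibiting, for each $n$, a lower bound on $E^\epsilon(\GHZ^{\otimes n})$ that is linear in $n$ — using that $\GHZ^{\otimes n}$ contains (as a tensor summand, up to local isometry) a large GHZ state on which logarithmic boundedness \eqref{eq:boundedSum}, run in reverse via monotonicity on average, forces a value growing like $n$ — so that the $\frac1n$-normalized limit stays bounded away from $0$. A secondary subtlety is making sure the equivalence invoked from (F\ref{it:asymptoticCont}) is applied only after \emph{all} its hypotheses (``the other axioms'') are in place; the argument above is structured to respect that. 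The rest is routine: rescaling by a positive constant commutes with every axiom, and local-isometry invariance of $\Phi(E)$ is immediate from \cref{def:smoothing}.
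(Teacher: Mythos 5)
Your proposal follows essentially the same route as the paper: verify \labelcref{it:norm}--\labelcref{it:monotone} from the normalization, the additivity assumption, and \cref{lem:phiEsubweakmon}/\cref{thm:AEP}, take asymptotic continuity from \cref{rem:asymptoticCont}, and then obtain the equality in \labelcref{eq:asympContle} from the equivalence in \cite[Proposition 4.1]{vrana2022asymptotic} once the other axioms are in place --- this is exactly the paper's argument. The one place you go beyond the paper, the claim that $\Phi(E)(\GHZ)>0$ can be \emph{forced} from logarithmic boundedness \eqref{eq:boundedSum} ``run in reverse,'' does not work: \eqref{eq:boundedSum} is only an upper bound and monotonicity on average only yields lower bounds by nonnegative quantities, and indeed $E\equiv 0$ lies in $\mathcal{F}_{\text{sub},k}$ with $\Phi(E)(\GHZ)=0$, so positivity genuinely can fail and must be treated as an implicit nondegeneracy assumption (the paper simply takes the normalization ``by construction'' and does not address this degenerate case either). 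Dropping or correctly qualifying that digression, your proof matches the paper's.
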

\begin{proof}

Normalization over the GHZ state \labelcref{it:norm} follows by construction.
Note that any additivity or monotonicity property or \cref{eq:asympContle} is independent of normalization, in fact these properties are kept if one multiplies the entanglement measure $E$ by any $\lambda>0$. Then full additivity \labelcref{it:additive} follows from the assumptions,
monotonicity on average \labelcref{it:monotone} follows from \cref{lem:phiEsubweakmon}, asymptotic continuity was proven in \cref{thm:AEP}. The equality follows from \cite[Proposition 4.1.]{vrana2022asymptotic}
\end{proof}

\section{The asymptotic spectrum of LOCC and its AEP}\label{sec:AEPforknown}

In \cite{jensen2019asymptotic} it was shown that the so called \emph{asymptotic spectrum of LOCC transformations} characterizes the optimal converse transformation rate \eqref{eq:Rconverse}. The asymptotic spectrum of LOCC consists of monotone (keeping LOCC order) functionals $F: S_k\to\reals_{\ge 0}$ that are invariant under local isometries, normalized on unit tensors $\langle r\rangle$ ($r$-level GHZ state), additive under the direct sum, multiplicative under the tensor product, and for some $\alpha\in[0,1]$ they satisfy 
$F(\sqrt{p}\psi)=p^\alpha F(\psi)$.

In this paper, we work with the more convenient form $E^\alpha({\psi})=\frac{\log F^\alpha({\psi})}{1-\alpha}$ of spectrum-elements called the logarithmic spectral points, where $\alpha\in[0,1)$ is given by the scaling property \labelcref{it:EspecScale}. The axioms defining the spectrum can be rephrased straightforwardly to logarithmic spectral functionals as follows.
\begin{enumerate}[({S}1)]\label{en:loccraxioms}
    \item\label{it:EspecScale} $E^\alpha(\sqrt{p}{\psi})=E^\alpha({\psi})+\frac{\alpha}{1-\alpha}\log p$,
    \item\label{it:Eunittensor} $E^\alpha(\unittensor{r})= \frac{1}{1-\alpha}\log r$,
    \item\label{it:EspecMulti} $E^\alpha({\psi}\otimes{\varphi})=E^\alpha({\psi})+E^\alpha({\varphi})$,
    \item\label{it:EspecAddi} $E^\alpha({\psi}\oplus{\varphi})=\max_{\lambda\in [0,1]}\left( \lambda E^\alpha(\ket{\psi})+(1-\lambda) E^\alpha({\varphi})+ \frac{1}{1-\alpha}h(\lambda)\right) $,
    \item\label{it:Emonotonicity} $E^\alpha({\psi})$ is monotone (non-increasing) under LOCC.
\end{enumerate}
\cref{it:EspecAddi} comes from the identity \cite[Eq. (2.13)]{strassen1991degeneration}:
\begin{equation}\label{eq:logsup}
    \log\big(2^{x_1}+2^{x_2}\big)=\max_{\lambda\in [0,1]}\quad \left(\lambda x_1 + (1-\lambda) x_2 + h(\lambda)\right),
\end{equation}
where $x_1, x_2\in\reals$ and $h(\lambda)= - \lambda\log\lambda - (1-\lambda)\log(1-\lambda)$ is the binary entropy. 
Functionals mapping $k$-partite vectors to $\reals$ and satisfying \cref{it:EspecAddi,it:EspecMulti,it:EspecScale,it:Eunittensor,it:Emonotonicity} are called {\it logarithmic spectral points}, we denote their set by $\Delta_{\log, k}$.

It was shown in \cite{jensen2019asymptotic} that the only spectrum-element with $\alpha=1$, is the squared norm $F^1(\psi)=\norm{\psi}^2$, on the other hand we can extend the set of logarithmic spectral points to the limit $\alpha\to 1$ as follows. At this limit properties \labelcref{it:EspecScale,it:Eunittensor} are not sensible, therefore we only consider functionals over the set of unit vectors, normalized to $1$ on the GHZ state (property \labelcref{it:norm}). \labelcref{it:EspecMulti,it:Emonotonicity} remain the same in the $\alpha\to 1$ limit. Property \labelcref{it:EspecAddi} can be rewritten in the $\alpha\to 1$ limit as follows. Let $\psi, \varphi$ be unit vectors, $p\in [0,1]$ and $E^\alpha$ be a functional satisfying all of the \labelcref{it:EspecAddi,it:EspecMulti,it:EspecScale,it:Emonotonicity,it:Eunittensor}. Then by \labelcref{it:EspecScale,it:EspecAddi} we can write
\begin{equation}\label{eq:EadditivityRewritten}
\begin{split}
    E^\alpha(\sqrt{p}{\psi}\oplus\sqrt{1-p}{\varphi})&=\max_{\lambda\in [0,1]}\bigg( \lambda E^\alpha({\psi})+(1-\lambda) E^\alpha({\varphi})+ \frac{1}{1-\alpha}h(\lambda) 
    \\
    &+\frac{\alpha}{1-\alpha} \left( \lambda\log p + (1-\lambda)\log (1-p) \right)\bigg)\\
    &= 
    \max_{\lambda\in [0,1]}\left(
    \lambda E^\alpha({\psi})+(1-\lambda) E^\alpha({\varphi})  + h(\lambda)
    -\frac{\alpha}{1-\alpha}  \binaryrelativeentropy{\lambda}{p}\right),
\end{split}
\end{equation}
where $\binaryrelativeentropy{\lambda}{p}$ denotes the Kullback–Leibler divergence between the binary probability distributions $(\lambda,1-\lambda)$ and $(p,1-p)$.
Taking the limit $\alpha\to 1$, the divergence $\binaryrelativeentropy{\lambda}{p}$ dominates the optimizable expression. From the fact that $\binaryrelativeentropy{\lambda}{p}\ge 0$ iff $\lambda=p$ follows the convergence of $\lambda\to p$, and the form in \labelcref{it:asymptoticCont}. In other words, \labelcref{it:asymptoticCont} can be seen as the $\alpha\to 1$ limit of \labelcref{it:EspecAddi}.

In the bipartite case the logarithmic spectral points are exactly the R\'enyi entanglement entropies $\entropy_\alpha(\Tr_1\vectorstate{\psi})$ for $\alpha\in [0,1]$ while $\mathcal{F}_2$ contains only the entanglement entropy $\entropy(\Tr_1\vectorstate{\psi})=\lim_{\alpha\to 1}\entropy_\alpha(\Tr_1\vectorstate{\psi})$.
Although there exist other entanglement measures which are non-increasing under LOCC channels, the entanglement entropy is the only one that survives the asymptotic limit \cite{vidal2000entanglement} in the bipartite case. This statement can be seen as an asymptotic equipartition property (AEP) of bipartite entanglement measures. 

\subsection{The relation of $\Delta_{\text{log},k}$ to $\mathcal{F}_{\text{sub},k}$}

In this section we observe the relation of the logarithmic spectral points to $\mathcal{F}_{\text{sub},k}$.

\begin{lemma}\label{lem:Enonneg}
    Let $E^\alpha\in\Delta_{\text{log},k}$, where $\alpha$ is determined by its scaling \labelcref{it:EspecScale}. For any unit vector $\psi\in\unitvectors(\Hilbert)$ we have $E^\alpha(\psi)\ge 0$.
\end{lemma}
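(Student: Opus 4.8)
The plan is to use the unit-tensor normalization $(S2)$ together with multiplicativity $(S3)$ and LOCC monotonicity $(S5)$ to pin down the sign of $E^\alpha$ on an arbitrary unit vector. The starting observation is that every nonzero $k$-partite vector $\psi$ can be obtained from a GHZ state / unit tensor of sufficiently high level by an LOCC transformation, possibly after rescaling the norm. Concretely, if $\psi\in\unitvectors(\Hilbert)$ with $\Hilbert=\Hilbert_1\otimes\dots\otimes\Hilbert_k$ and $r=\min_j\dim\Hilbert_j$, then $\unittensor{r}$ (the $r$-level GHZ state) can be deterministically transformed into $\psi$ by LOCC: one party prepares the Schmidt-like decomposition and the others apply local isometries and a local measurement implementing the appropriate instrument; equivalently one uses that $\unittensor{r}$ LOCC-dominates every state of local rank at most $r$. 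Hence $\unittensor{r}\LOCC\psi$, and by monotonicity $(S5)$ we get $E^\alpha(\unittensor{r})\ge E^\alpha(\psi)$. That gives an upper bound, not the lower bound we want, so the real content is in the other direction.

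For the lower bound I would argue as follows. Fix $\psi\in\unitvectors(\Hilbert)$. For any $n$, the tensor power $\psi^{\otimes n}$ lives on a Hilbert space whose $j$-th local dimension is $(\dim\Hilbert_j)^n$, so in particular $\psi^{\otimes n}$ can be LOCC-transformed \emph{into} the unit tensor $\unittensor{r}$ for $r=1$ trivially, but more usefully, there is a nonzero amplitude $p_n>0$ with which $\psi^{\otimes n}$ can be converted by a (trace-non-increasing) LOCC filter into $\unittensor{2}$ or more generally $\unittensor{R}$ for $R$ growing with $n$ — this is just the SLOCC statement that generic enough states degenerate to the GHZ tensor, or more elementarily that any $\psi$ with all local ranks $\ge 2$ satisfies $\psi\LOCC\sqrt{p}\,\unittensor{2}$ for some $p>0$. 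Applying $(S5)$ to $\psi^{\otimes n}\LOCC\sqrt{p_n}\,\unittensor{R_n}$ and then $(S1)$ and $(S2)$ yields
\begin{equation*}
n E^\alpha(\psi)=E^\alpha(\psi^{\otimes n})\ge E^\alpha\!\left(\sqrt{p_n}\,\unittensor{R_n}\right)=\frac{1}{1-\alpha}\log R_n+\frac{\alpha}{1-\alpha}\log p_n.
\end{equation*}
If one can arrange $R_n$ and $p_n$ so that the right-hand side, divided by $n$, does not tend to $-\infty$ — e.g. $R_n$ bounded below by a constant $\ge 1$ and $\log p_n = o(n)$, or even just $p_n=p^n$ for a fixed $p$ with $\tfrac{1}{1-\alpha}\log R \ge -\tfrac{\alpha}{1-\alpha}\log p$ — then dividing by $n$ and letting $n\to\infty$ forces $E^\alpha(\psi)\ge 0$. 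The cleanest route is to take the single-copy filtration $\psi\LOCC\sqrt{p}\,\unittensor{r_\psi}$ with $r_\psi=\min_j\dim\Hilbert_j\ge 1$ chosen so that $\tfrac{1}{1-\alpha}\log r_\psi\ge -\tfrac{\alpha}{1-\alpha}\log p$; when $\psi$ is genuinely entangled across each cut this works with $r_\psi\ge 2$, and the degenerate cases (some local dimension or effective rank equal to $1$, i.e. $\psi$ factorizes off a tensor factor in some party) are handled by induction on $k$ or by noting $E^\alpha$ then reduces to a spectral point on fewer parties, with the $k=1$ base case giving $E^\alpha(\psi)=E^\alpha(\unittensor{1})=0$.

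The step I expect to be the main obstacle is making precise, and with good enough control on $p$, the claim that every unit vector $\psi$ admits an LOCC filter $\psi\LOCC\sqrt{p}\,\unittensor{r}$ with $r$ large enough to beat the $\tfrac{\alpha}{1-\alpha}\log p$ penalty — in other words, that the "GHZ-extraction" succeeds with not-too-small probability. For $\alpha$ close to $1$ the penalty coefficient $\tfrac{\alpha}{1-\alpha}$ blows up, so a fixed $p<1$ with $r=2$ is not obviously enough for all $\alpha$; one genuinely needs the asymptotic (many-copy) version, extracting $\unittensor{R_n}$ with $R_n=2^{cn}$ and $p_n$ not decaying too fast, which is exactly the kind of estimate underlying the SLOCC/degeneration results of Strassen and the LOCC-spectrum framework of \cite{jensen2019asymptotic}. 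I would therefore lean on the one-copy bound for the case $\alpha$ bounded away from $1$ and, for the full range, invoke that $\psi^{\otimes n}$ can be LOCC-converted to $\unittensor{R_n}$ with $R_n$ exponentially large and success probability $2^{-o(n)}$ — or, most economically, simply cite the normalization and scaling axioms together with the known fact (from the cited literature) that the unit tensors are cofinal from below in the LOCC preorder, so that $E^\alpha$ being monotone and correctly normalized on $\unittensor{r}$ leaves no room for negative values. The write-up would present the short one-copy argument as the main line and remark on the degenerate cases.
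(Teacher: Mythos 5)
Your proposal attacks the lemma from the hard direction and has genuine gaps. The single-copy claim you lean on --- that any $\psi$ with all local ranks at least $2$ admits a filtration $\psi\LOCC\sqrt{p}\,\unittensor{2}$ with $p>0$ --- is false: the three-qubit W state has all marginal ranks equal to $2$, yet it cannot be converted into $\unittensor{2}$ with any nonzero probability (a restriction $(A_1\otimes A_2\otimes A_3)\mathrm{W}=\unittensor{2}$ would force the $A_i$ to be invertible, making W and GHZ SLOCC-equivalent, which they are not). Your asymptotic fallback --- extracting $\unittensor{R_n}$ from $\psi^{\otimes n}$ with $R_n=2^{cn}$ and success probability $2^{-o(n)}$ so that $\frac{1}{n}\bigl(\frac{1}{1-\alpha}\log R_n+\frac{\alpha}{1-\alpha}\log p_n\bigr)$ stays nonnegative --- is a substantive quantitative statement about GHZ distillation rates versus failure probabilities that you do not establish, that blows up exactly in the regime $\alpha\to1$ you flag, and that fails outright for states with no $k$-partite GHZ content (e.g.\ $\EPR_{12}\otimes\ket{0}_3$, for which $R_n=1$ for all $n$); your plan to dispose of such cases "by induction on $k$" is only sketched, and quantifying extraction rates via the spectrum characterization of \cite{jensen2019asymptotic} would be circular, since the spectral points are what the lemma is about.

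The paper's proof is a one-liner in the opposite spirit, which you missed: instead of extracting entanglement from $\psi$, degrade it completely. Each party discards its system and locally prepares $\ket{0}$; this is a deterministic LOCC channel mapping $\vectorstate{\psi}$ to the product state $\ket{0,\dots,0}=\GHZ_1=\unittensor{1}$. Monotonicity \labelcref{it:Emonotonicity} together with the normalization \labelcref{it:Eunittensor} then gives $E^\alpha(\psi)\ge E^\alpha(\unittensor{1})=\frac{1}{1-\alpha}\log 1=0$, with no probabilities, no asymptotics and no case distinctions. Your opening observation already had all the ingredients --- unit tensors, monotonicity, normalization --- but aimed at a unit tensor above $\psi$ in the LOCC order (which, as you note, only yields an upper bound); the point is that the trivial unit tensor $\unittensor{1}$ lies below every state, and its normalized value is $0$.
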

\begin{proof}
    First note that any pure state $\vectorstate{\psi}$ can be transformed into a product state by an LOCC channel with probability one. This may be done by simply discarding the original state $\psi$ and creating a local state $\ket{0}$ for each party $j$. The resulting state is $\GHZ_1= \ket{0,\dots,0}$. Then \labelcref{it:Eunittensor,it:Emonotonicity} leads to
    \begin{equation}
        E^\alpha(\psi)\ge E^\alpha(\GHZ_1)=0. 
    \end{equation}
\end{proof}

In the following lemma, we formulate the multi-summand version of \labelcref{it:EspecAddi}.
\begin{lemma}\label{lem:Fmultiadditivity}
    Let $E^\alpha\in\Delta_{\text{log},k}$ be a logarithmic spectral point with $\alpha$ scaling, $l\in\naturals$, and for any $i\in [l]$ $\psi_i\in\Hilbert_i$. Then
\begin{equation}
E^\alpha({\psi_1}\oplus\dots\oplus{\psi_l})=\max_{Q\in \distributions([l])}\left( \sum_{i\in [l]} Q_i E^\alpha({\psi_i}) + \frac{1}{1-\alpha}\entropy(Q)\right).
\end{equation}
\end{lemma}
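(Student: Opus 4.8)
The plan is to prove this by induction on $l$, using the two-summand identity \labelcref{it:EspecAddi} as the base case. For $l=1$ the statement is trivial, and for $l=2$ it is exactly \labelcref{it:EspecAddi} after identifying a probability distribution $Q\in\distributions([2])$ with its first component $\lambda\in[0,1]$, since $h(\lambda)=\entropy(Q)$. For the inductive step I would write $\psi_1\oplus\dots\oplus\psi_{l}=\psi_1\oplus(\psi_2\oplus\dots\oplus\psi_{l})$, treating the second summand as a single (unnormalized) vector $\chi\coloneqq\psi_2\oplus\dots\oplus\psi_{l}$. Applying \labelcref{it:EspecAddi} together with the scaling property \labelcref{it:EspecScale} to absorb the norm $\norm{\chi}$ (exactly as in \cref{eq:EadditivityRewritten}), and then the inductive hypothesis for the $(l-1)$-summand expression $E^\alpha(\chi/\norm{\chi})$, gives a nested maximum which should collapse to the claimed single maximum over $\distributions([l])$.

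The key computational step is the following merging of the two optimizations. After applying \labelcref{it:EspecAddi} to split off the first summand we obtain a maximum over $\lambda\in[0,1]$ of
\[
\lambda E^\alpha(\psi_1)+(1-\lambda)E^\alpha\!\left(\tfrac{\chi}{\norm{\chi}}\right)+\tfrac{1}{1-\alpha}h(\lambda)+\tfrac{\alpha}{1-\alpha}\bigl(\lambda\log\norm{\psi_1}^2+(1-\lambda)\log\norm{\chi}^2\bigr),
\]
and then the inductive hypothesis rewrites $E^\alpha(\chi/\norm{\chi})$ as a maximum over $Q'\in\distributions([l-1])$ of $\sum_{i=2}^{l}Q'_{i-1}E^\alpha(\psi_i/\norm{\psi_i})+\tfrac{1}{1-\alpha}\entropy(Q')$ (again using \labelcref{it:EspecScale} to normalize each $\psi_i$). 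Setting $Q_1=\lambda$ and $Q_i=(1-\lambda)Q'_{i-1}$ for $i\ge 2$ defines a bijection between pairs $(\lambda,Q')$ and distributions $Q\in\distributions([l])$ with the decomposition property, and the chain rule for Shannon entropy, $\entropy(Q)=h(\lambda)+(1-\lambda)\entropy(Q')$, together with the analogous linearity of the $\log\norm{\psi_i}^2$ terms, shows that the nested objective equals $\sum_{i\in[l]}Q_iE^\alpha(\psi_i/\norm{\psi_i})+\tfrac{\alpha}{1-\alpha}\sum_{i\in[l]}Q_i\log\norm{\psi_i}^2+\tfrac{1}{1-\alpha}\entropy(Q)$. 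Finally undoing the normalization via \labelcref{it:EspecScale} in reverse turns $E^\alpha(\psi_i/\norm{\psi_i})+\tfrac{\alpha}{1-\alpha}\log\norm{\psi_i}^2$ back into $E^\alpha(\psi_i)$, yielding the claimed formula.

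The main obstacle is purely bookkeeping: one must be careful that the vectors $\psi_i$ in the statement need not be unit vectors, so the scaling property \labelcref{it:EspecScale} has to be invoked repeatedly and consistently, and the two maxima (over $\lambda$ and over $Q'$) must be verified to genuinely combine into one unconstrained maximum over $\distributions([l])$ with no spurious constraints. I expect no real difficulty here since \labelcref{it:EspecAddi} already has the "$\max$" built in, so the nested maximum over $\lambda$ and $Q'$ ranges over a set that maps onto all of $\distributions([l])$ surjectively; the identity $\log(2^{x_1}+\dots+2^{x_l})=\max_{Q\in\distributions([l])}\bigl(\sum_i Q_i x_i+\entropy(Q)\bigr)$ (the multi-term analogue of \cref{eq:logsup}) is the underlying fact, and the induction is just its inductive proof transported through $E^\alpha$.
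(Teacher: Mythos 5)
Your proposal is correct and is essentially the paper's own argument: the paper proves the lemma by applying \labelcref{it:EspecAddi} successively together with the chain rule of Shannon entropy, which is exactly your induction with $Q_1=\lambda$, $Q_i=(1-\lambda)Q'_{i-1}$ and $\entropy(Q)=h(\lambda)+(1-\lambda)\entropy(Q')$. The only remark is that the normalization detour via \labelcref{it:EspecScale} is unnecessary, since \labelcref{it:EspecAddi} and the inductive hypothesis already hold for arbitrary (not necessarily unit) vectors; indeed your displayed intermediate formula momentarily double-counts, pairing the unnormalized $E^\alpha(\psi_1)$ with an extra $\tfrac{\alpha}{1-\alpha}\lambda\log\norm{\psi_1}^2$ term, a slip that disappears if you either write $E^\alpha(\psi_1/\norm{\psi_1})$ there or skip the normalization altogether.
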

\begin{proof}
The statement holds for $l=2$ by \labelcref{it:EspecAddi}. We use that successively along with the chain rule of Shannon entropy to get the right hand side.

\end{proof}

\begin{lemma}\label{lem:EadditivityMultiSum}
Let $l\in \naturals$ and $P\in\distributions([l])$ with full support on $[l]$. Then any $E^\alpha\in\Delta_{\text{log},k}$ with $\alpha$ scaling satisfies the inequality
\begin{equation}\label{eq:EsumUpperbyHalpha}
    E^\alpha(\sqrt{P(1)}{\psi_1}\oplus\dots\oplus\sqrt{P(l)}{\psi_l})\le
    \max_{i\in[l]} E^\alpha({\psi_i}) + \entropy_\alpha(P)
    .
\end{equation}
In particular we have the upper bound
\begin{equation}\label{eq:EsumsimpleUpper}
    E^\alpha(\sqrt{P(1)}{\psi_1}\oplus\dots\oplus\sqrt{P(l)}{\psi_l})\le
    \max_{i\in[l]} E^\alpha({\psi_i}) + \log l
    .
\end{equation}
\end{lemma}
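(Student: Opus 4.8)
The plan is to derive \cref{eq:EsumUpperbyHalpha} directly from the exact multi-summand additivity formula of \cref{lem:Fmultiadditivity}, and then obtain \cref{eq:EsumsimpleUpper} from it by the trivial bound $\entropy_\alpha(P)\le\log l$.

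First I would write out, using \cref{lem:Fmultiadditivity} applied to the vectors $\sqrt{P(i)}\psi_i$,
\begin{equation*}
E^\alpha\big(\textstyle\sqrt{P(1)}\psi_1\oplus\dots\oplus\sqrt{P(l)}\psi_l\big)
=\max_{Q\in\distributions([l])}\left(\sum_{i\in[l]} Q_i E^\alpha\big(\sqrt{P(i)}\psi_i\big)+\frac{1}{1-\alpha}\entropy(Q)\right).
\end{equation*}
Then I apply the scaling property \labelcref{it:EspecScale} to each summand, $E^\alpha(\sqrt{P(i)}\psi_i)=E^\alpha(\psi_i)+\frac{\alpha}{1-\alpha}\log P(i)$, so that the optimized expression becomes
\begin{equation*}
\sum_{i\in[l]} Q_i E^\alpha(\psi_i)+\frac{\alpha}{1-\alpha}\sum_{i\in[l]} Q_i\log P(i)+\frac{1}{1-\alpha}\entropy(Q).
\end{equation*}

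Next I would bound $\sum_i Q_i E^\alpha(\psi_i)\le\max_{i\in[l]}E^\alpha(\psi_i)$ (using $E^\alpha(\psi_i)\ge 0$ from \cref{lem:Enonneg} is not even needed here since $Q$ is a probability distribution), which pulls the $E^\alpha$ term out of the maximization. The remaining terms are $\frac{1}{1-\alpha}\big(\entropy(Q)+\alpha\sum_i Q_i\log P(i)\big)=\frac{1}{1-\alpha}\big(\entropy(Q)-\alpha\relativeentropy{Q}{P}-\alpha\entropy(Q)\big)=\entropy(Q)-\frac{\alpha}{1-\alpha}\relativeentropy{Q}{P}$, after recognizing $-\sum_i Q_i\log P(i)=\relativeentropy{Q}{P}+\entropy(Q)$ (valid because $P$ has full support) and collecting the $\entropy(Q)$ coefficients as $\frac{1-\alpha}{1-\alpha}=1$. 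Maximizing this over $Q\in\distributions([l])$ is exactly the variational characterization of the R\'enyi entropy in \cref{eq:variationalRenyientropy}, giving $\entropy_\alpha(P)$, which yields \cref{eq:EsumUpperbyHalpha}. Finally, $\entropy_\alpha(P)\le\log l$ (the maximal value of any R\'enyi entropy on an $l$-point alphabet) gives \cref{eq:EsumsimpleUpper}.

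The only mild subtlety — not really an obstacle — is the algebraic regrouping showing the leftover terms match the variational formula exactly: one must be careful that the coefficient of $\entropy(Q)$ really collapses to $1$ and that the sign on the divergence comes out as $-\frac{\alpha}{1-\alpha}$, which relies on $\alpha\in[0,1)$ so that $\frac{1}{1-\alpha}>0$ and the manipulations preserve the direction of the bound; the full-support hypothesis on $P$ is what makes $\log P(i)$ finite so that the rewriting in terms of $\relativeentropy{Q}{P}$ is legitimate. Everything else is a direct substitution.
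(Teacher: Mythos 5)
Your proposal is correct and follows essentially the same route as the paper: apply \cref{lem:Fmultiadditivity} together with the scaling property \labelcref{it:EspecScale}, regroup the entropy and $\log P(i)$ terms into $\entropy(Q)-\frac{\alpha}{1-\alpha}\relativeentropy{Q}{P}$, bound the convex combination of the $E^\alpha(\psi_i)$ by their maximum, and conclude via the variational formula \cref{eq:variationalRenyientropy} and $\entropy_\alpha(P)\le\log l$. The algebraic details (coefficient of $\entropy(Q)$ collapsing to $1$, sign of the divergence, full support of $P$) are handled correctly.
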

\begin{proof}
We begin with a calculation analogous to \cref{eq:EadditivityRewritten}. By \cref{lem:Fmultiadditivity} and the scaling property \labelcref{it:EspecScale} we write
\begin{equation}
\begin{split}
    E^\alpha(\sqrt{P(1)}{\psi_1}\oplus\dots\oplus\sqrt{P(l)}{\psi_l})&=\max_{Q\in \distributions([l])}\Bigg( \sum_{i\in [l]} Q(i) E^\alpha({\psi_i}) + \frac{1}{1-\alpha}\entropy(Q)
    \\
    &+\frac{\alpha}{1-\alpha} \sum_{i\in [l]}Q(i)\log P(i) \Bigg) \\
&=\max_{Q\in \distributions([l])} \Bigg(\sum_{i\in [l]} Q(i) E^\alpha({\psi_i}) + \entropy(Q)
    \\
    &-\frac{\alpha}{1-\alpha} \relativeentropy{Q}{P}\Bigg)\\
&\le \max_{i\in [l]} E^\alpha({\psi_i}) + 
\max_{Q\in \distributions([l])} \left(\entropy(Q)
    -\frac{\alpha}{1-\alpha} \relativeentropy{Q}{P}\right).
\end{split}
\end{equation}
The proof of \cref{eq:EsumUpperbyHalpha} is finished via the variational formula \cref{eq:variationalRenyientropy},
from which \cref{eq:EsumsimpleUpper} follows from $\entropy_\alpha\le \log l$.
\end{proof}

In \cite{jensen2019asymptotic}[Proposition 3.3 and 3.4] they show that the spectrum elements can be extended to conditionally pure states and that these are monotone under LOCC. The following lemma is a direct consequence of the this statement, by writing the extension in terms of the logarithmic functionals and using \eqref{eq:logsup} without the supremum as a lower bound. Nevertheless, here we show it directly.

\begin{lemma}\label{lem:EmonotoneOnAvg}
    Let $E^\alpha\in\Delta_{\text{log},k}$. Then if we restrict $E^\alpha$ to unit vectors, the resulting functional is monotone on average (therefore also weakly monotone), i.e., if
    \begin{equation}\label{eq:psiLOCC}
        \psi\LOCC \sum_{x\in\mathcal{X}}
        P(x)\vectorstate{\psi_x}\otimes\vectorstate{x},
    \end{equation}
    then
    \begin{equation}
        E^{\alpha}(\psi)\ge
        \sum_{x\in\mathcal{X}}P(x)E^\alpha(\psi_x).
    \end{equation}
\end{lemma}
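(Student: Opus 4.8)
The plan is to reduce the statement about unit vectors to the known monotonicity of logarithmic spectral points on conditionally pure states, which is implicit in the axioms together with \cref{eq:logsup}. First I would recall that an LOCC transformation of the form \labelcref{eq:psiLOCC} can, by the discussion in \cref{sub:LOCC} (following \cite{jensen2019asymptotic}), be realized as a composition of remembering one-step LOCC channels, so it suffices to verify the claimed inequality for a single remembering one-step channel and then chain the inequalities, since a composition of monotone maps is monotone and the classical register values only get refined. For a one-step channel acting on party $i$ with Kraus operators $K_x$, the output is $\sum_{x} P(x) \vectorstate{\psi_x}$ with $P(x) = \norm{(K_x)_i \psi}^2$ and $\psi_x = (K_x)_i \psi / \sqrt{P(x)}$.

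Next I would invoke the structure of $E^\alpha$ on tensor sums. The key point is that the output conditionally pure state $\sum_x P(x) \vectorstate{\psi_x} \otimes \vectorstate{x}$ can be viewed, up to local isometries on each party, as the direct-sum vector $\bigoplus_x \sqrt{P(x)}\, \psi_x$ living in a $k$-partite Hilbert space (this is exactly how tensor sums are embedded in the preliminaries). Applying \cref{lem:Fmultiadditivity} gives
\begin{equation}
E^\alpha\!\left(\bigoplus_{x\in\mathcal{X}} \sqrt{P(x)}\,\psi_x\right)
= \max_{Q\in\distributions(\mathcal{X})}\left( \sum_{x\in\mathcal{X}} Q(x) E^\alpha(\psi_x) + \frac{1}{1-\alpha}\entropy(Q)\right)
\ge \sum_{x\in\mathcal{X}} P(x) E^\alpha(\psi_x),
\end{equation}
where the inequality comes from choosing $Q = P$ and discarding the nonnegative entropy term. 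Then by the monotonicity axiom \labelcref{it:Emonotonicity} applied to the one-step LOCC channel, $E^\alpha(\psi) \ge E^\alpha(\bigoplus_x \sqrt{P(x)}\,\psi_x)$. Combining the two inequalities yields the claim for a one-step channel, and chaining gives the general case.

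The one subtlety I expect to be the main obstacle is making precise the identification of the conditionally pure output state with a genuine $k$-partite direct-sum vector on which $E^\alpha$ (a priori defined only on vectors, not on conditionally pure states with a classical register) can be evaluated — i.e. justifying that the classical register can be absorbed into one party's Hilbert space as an orthogonal direct-sum decomposition without changing the value of $E^\alpha$, using invariance under local isometries. Once that bookkeeping is done, everything else is a direct application of \cref{lem:Fmultiadditivity} and the axioms. Alternatively, one can bypass this by directly citing \cite{jensen2019asymptotic}[Propositions 3.3 and 3.4], where the extension of spectrum elements to conditionally pure states and their LOCC-monotonicity is established; restricting that extended functional back to the inputs $\psi$ and using the lower bound \cref{eq:logsup} (without the maximum) gives the stated inequality immediately. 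I would present the self-contained argument via \cref{lem:Fmultiadditivity} as the main line and mention the citation as the shortcut.
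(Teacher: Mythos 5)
Your main, self-contained argument has a genuine gap at the step $E^\alpha(\psi)\ge E^\alpha\bigl(\bigoplus_x\sqrt{P(x)}\,\psi_x\bigr)$. The monotonicity axiom \labelcref{it:Emonotonicity} cannot be invoked there, because $\psi$ cannot in general be transformed by LOCC into the coherent $k$-partite tensor sum $\bigoplus_x\sqrt{P(x)}\,\psi_x$: in the paper's convention this vector lives in $(\bigoplus_x\Hilbert_1)\otimes\dots\otimes(\bigoplus_x\Hilbert_k)$, so the which-block label is held \emph{coherently by every party} and contributes GHZ-type entanglement across every cut. What local-isometry invariance actually gives you (via the dilation $V=\sum_x K_x\otimes\ket{x}$ of the instrument) is $E^\alpha(\psi)=E^\alpha\bigl(\sum_x\sqrt{P(x)}\,\psi_x\otimes\ket{x}_i\bigr)$ with the register sitting only at the measuring party $i$; that vector is \emph{not} a tensor sum in the sense of \labelcref{it:EspecAddi}, so \cref{lem:Fmultiadditivity} does not apply to it. Concretely, take $k=2$, $\psi=\ket{+}\otimes\ket{0}$, and let party $1$ measure in the computational basis: then $E^\alpha(\psi)=0$ and all $\psi_x$ are product states, while \cref{lem:Fmultiadditivity} together with the scaling property \labelcref{it:EspecScale} and the variational formula \eqref{eq:variationalRenyientropy} gives $E^\alpha\bigl(\bigoplus_x\sqrt{P(x)}\,\psi_x\bigr)=\entropy_\alpha(P)=1>0$, so your intermediate inequality fails strictly (the entropy term you discard is exactly the excess). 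The reduction to one-step channels and the lower bound obtained by choosing $Q=P$ are fine; it is the comparison between $\psi$ and the coherent direct sum that cannot be repaired without quantum communication distributing the block label.

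Your fallback route --- citing \cite{jensen2019asymptotic} (Propositions 3.3 and 3.4) for the extension of spectral points to conditionally pure states and its LOCC monotonicity, and then lower bounding via \eqref{eq:logsup} without the maximum --- is sound, and it is precisely the shortcut the paper itself mentions before giving a direct proof. The paper's direct proof, however, is genuinely different from your main line: it applies $\Lambda^{\otimes n}$ to $\psi^{\otimes n}$, collects all permutations of a fixed string $\overline{x}$ into a single outcome, so that $\psi^{\otimes n}$ is transformed by a trace non-increasing LOCC, with success probability equal to the type-class probability, into $\psi_{\overline{x}}$; it then combines multiplicativity \labelcref{it:EspecMulti}, the scaling property \labelcref{it:EspecScale} (this is where the success probability enters) and monotonicity \labelcref{it:Emonotonicity}, and uses the standard lower bound on the type-class probability so that the correction term vanishes as $n\to\infty$ when the empirical frequencies of $\overline{x}$ converge to $P$. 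If you want a self-contained proof, write that typicality argument (or carefully develop the conditionally-pure extension); as it stands, only your citation-based shortcut is valid.
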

\begin{proof}
    Let $\Lambda$ be the the LOCC channel implementing \cref{eq:psiLOCC}. Then we also have
    \begin{equation}
        \psi^{\otimes n}
        \overset{\Lambda^{\otimes n}}{\longrightarrow} 
        \sum_{\overline{x}\in\mathcal{X}^n}
    P(\overline{x})\vectorstate{\psi_{\overline{x}}}\otimes\vectorstate{\overline{x}},
    \end{equation}
where $\overline{x}=(x_1,\dots,x_n)\in\mathcal{X}^n$ are $n$-strings, $P(\overline{x})=P(x_1)\dots P(x_{n})$, and $\psi_{\overline{x}}=\psi_{x_1}\otimes\dots\otimes\psi_{x_n}$.
Choosing a specific string $\overline{x}$, we can transform unitarily (by the permutation of tensor powers) any permutations of the string $\overline{x}$ to $\overline{x}$, therefore having
\begin{equation}
\psi^{\otimes n}\LOCC
{n \choose \#_{x\in\mathcal{X}}}
{
P(x_1)\dots P(x_{n})} \,
\vectorstate{\psi_{\overline{x}}}, 
\end{equation}
where $\#_{x\in\overline{x}}$ denotes the number of occurences of $x\in\mathcal{X}$ in the $n$-string $\overline{x}$, and the multinomial coefficient gives the number of all possible permutations of $\overline{x}$. Then by additivity \labelcref{it:EspecMulti}, the
scaling property \labelcref{it:EspecScale} and the monotonicity \labelcref{it:Emonotonicity} we write
\begin{equation}\label{eq:Ealphamononavgmaineq}
\begin{split}
    E^\alpha(\psi)&=\frac{1}{n}E^\alpha(\psi^{\otimes n}) \\
    &\ge \frac{1}{n}E^\alpha(\psi_{\overline{x}})
    + \frac{1}{n}\frac{\alpha}{1-\alpha}
    \log {n \choose \#_{x\in\overline{x}}}
{
P(x_1)\dots P(x_{n})}.
\end{split}
\end{equation}
The term in the logarithm is known as the probability of the type-class of $\overline{x}$, i.e. the total probability of the strings which are permutations of $\overline{x}$ (see \cite{csiszar2011information}). This is lower bounded by
\begin{equation}
    {n \choose \#_{x\in\overline{x}}}
P(x_1)\dots P(x_{n}) \ge \frac{1}{(n+1)^{\lvert\mathcal{X}\rvert}}2^{-n\relativeentropy{\frac{\#_{x\in\overline{x}}}{n}}{\{P(x)\}_{x\in\mathcal{X}}}},
\end{equation}
 Then
 \begin{equation}
     \limsup_{n\to\infty}\frac{1}{n}\log {n \choose \#_{x\in\overline{x}}}
     P(x_1)\dots P(x_{n})
     \ge  -\limsup_{n\to\infty}\relativeentropy{\frac{\#_{x\in\overline{x}}}{n}}{\{P(x)\}_{x\in\mathcal{X}}},
 \end{equation}
which converges to $0$ if the sequence of $\overline{x}$ is chosen such that the relative frequencies $\frac{\#_{x\in\overline{x}}}{n}$ of each $x\in\mathcal{X}$ converge to $P(x)$. For the first term in \cref{eq:Ealphamononavgmaineq}, using additivity we write
    \begin{equation}
        \frac{1}{n}E^\alpha(\psi_{\overline{x}})=
        \sum_{x\in\mathcal{X}}\frac{\#_{x\in\overline{x}}}{n} E^\alpha(\psi_{x}),
    \end{equation}
    which converges to 
$\sum_{x\in\mathcal{X}}P(x) E^\alpha(\psi_{x})$ in the $n\to\infty$ limit, by the choice of $\overline{x}$.
\end{proof}

\begin{corollary}\label{cor:deltaloginfsub}
    Let us restrict $E\in\Delta_{\text{log},k}$ to the unit vectors (and still denote it by $E$). Then $E\in\mathcal{F}_{\text{sub},k}$.
\end{corollary}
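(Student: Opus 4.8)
\textbf{Proof plan for \cref{cor:deltaloginfsub}.}

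The plan is to verify the three defining conditions of $\mathcal{F}_{\text{sub},k}$ (\cref{def:deltasub}) for the restriction of $E=E^\alpha\in\Delta_{\text{log},k}$ to unit vectors, each by invoking a result already established in this subsection. First, \emph{non-negativity} (implicit in the codomain $\nonnegativereals$) is exactly \cref{lem:Enonneg}. Second, \emph{monotonicity on average} \labelcref{it:monotone} — and hence in particular weak monotonicity — is precisely \cref{lem:EmonotoneOnAvg}. These two steps are immediate citations.

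The only substantive point is the remaining two conditions, and here I separate the case $\alpha<1$ from the limiting case $\alpha=1$ (the latter having been adjoined to $\Delta_{\text{log},k}$ in the discussion preceding this corollary). For $\alpha\in[0,1)$: \emph{subadditivity} $E(\psi_1\otimes\psi_2)\le E(\psi_1)+E(\psi_2)$ holds with equality by the multiplicativity axiom \labelcref{it:EspecMulti}; \emph{logarithmic boundedness} \eqref{eq:boundedSum} is the content of \cref{eq:EsumsimpleUpper} in \cref{lem:EadditivityMultiSum}, applied with $P(i)=\norm{\psi_i}^2$ (which is a genuine probability distribution on $[l]$ precisely because $\psi_1\oplus\dots\oplus\psi_l$ is a unit vector), after discarding the indices $i$ with $\norm{\psi_i}=0$ so that $P$ has full support as required by that lemma. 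For $\alpha=1$: multiplicativity \labelcref{it:EspecMulti} is retained by construction in the $\alpha\to1$ extension, giving subadditivity (again with equality); and logarithmic boundedness follows because the $\alpha\to1$ form of the direct-sum axiom \labelcref{it:asymptoticCont} together with the chain rule for Shannon entropy yields $E(\psi_1\oplus\dots\oplus\psi_l)=\sum_i\norm{\psi_i}^2 E(\psi_i/\norm{\psi_i})+\entropy(\{\norm{\psi_i}^2\}_i)\le\max_i E(\psi_i/\norm{\psi_i})+\log l$ — this is exactly the computation already carried out in the proof that $\mathcal{F}_k\subset\mathcal{F}_{\text{sub},k}$.

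There is no real obstacle here: the corollary is a bookkeeping consequence of \cref{lem:Enonneg}, \cref{lem:EmonotoneOnAvg}, \cref{lem:EadditivityMultiSum} and the axioms \labelcref{it:EspecMulti,it:asymptoticCont}. The only point requiring a word of care is the full-support hypothesis in \cref{lem:EadditivityMultiSum}, which is handled by the trivial reduction noted above (a zero-norm summand contributes nothing to a tensor sum of unit vectors and can be omitted), and the fact that for $\alpha=1$ the scaling-based axioms are replaced by \labelcref{it:asymptoticCont}, so one must quote the $\alpha=1$ argument rather than \cref{lem:EadditivityMultiSum} directly.
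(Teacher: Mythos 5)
Your proposal is correct and follows essentially the same route as the paper: the paper's proof likewise obtains subadditivity directly from the multiplicativity axiom \labelcref{it:EspecMulti} and cites \cref{lem:Enonneg}, \cref{lem:EmonotoneOnAvg}, and \cref{lem:EadditivityMultiSum} for the remaining properties. Your extra remarks on the full-support hypothesis (dropping zero-norm summands) and on the $\alpha=1$ limiting case are harmless refinements of the same argument.
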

\begin{proof}
    The subadditivity and the normalization on the $\GHZ$ states are satisfied by definition, the rest of the first statement is the direct consequence of the previous lemmas (\cref{lem:Enonneg,lem:EmonotoneOnAvg,lem:EadditivityMultiSum}).
\end{proof}

\subsection{Evaluating the smoothing limit on the known spectral points}\label{sub:eval}

An $\alpha,\{\theta_j\}_j$ parametric family of functionals $E^{\alpha,\theta}\in\Delta_k$ over the $k$-partite vectors (not necessarily unit) was constructed in \cite{vrana2023family}, where $\alpha\in [0,1)$ defines its scaling property and $\theta_j$, $j\in[k]$ are convex weights (probability distribution) over the subsystems.

In this section, we evaluate the smoothing limit on these entanglement measures, and
conclude that the results are elements of $\mathcal{F}_k$. 
We avoid the explicit introduction of these functionals, because we only need the bounds we introduce in the following.

For any fixed weights $\theta_j$, the functionals $E^{\alpha,\theta}$ are non-increasing as functions of $\alpha$.
The minimal elements of this family are
\begin{equation}\label{eq:knownelementslowerbound}
    E^{1,\theta}(\psi)= 
    \sum_{j\in [k]} \theta_j
    \entropy\left(\frac{\Tr_j \vectorstate{\psi}}{\norm{\psi}^2}\right) \le E^{\alpha,\theta}(\psi),
\end{equation}
which are the convex combinations of the von Neumann entropies of the marginals, while the maximal elements are
\begin{equation}\label{eq:knownelementsupperbound}
     E^{\alpha,\theta}(\psi)\le  E^{0,\theta}(\psi)\le
    \sum_{j\in[k]} \theta_j \,\log\rank\Tr_j \vectorstate{\psi}  =
    \sum_{j\in[k]} \theta_j \entropy_0(\Tr_j \vectorstate{\psi})
    ,
\end{equation}
which can be seen from \cite[Definition 4.4]{vrana2023family}, using that an LOCC transformation can not increase the rank of the marginal states.

The common property of these bounds is that they are expressed in terms of convex combinations of bipartite R\'enyi entanglement entropies. In the following lemma we connect this with the classical AEP.
\begin{lemma}\label{lem:quantumAEP}
    Let $\psi\in\unitvectors(\Hilbert_1\otimes\dots\otimes\Hilbert_k)$ be a $k$-partite unit vector, $0\le\theta_j$ and $0\le\alpha_j\le 1$ for $b\subseteq [k]$. Then
    \begin{equation}
        \lim_{\epsilon \to 0}\limsup_{n \to \infty} \frac{1}{n} \inf_{\substack{\varphi \in \unitvectors(\mathcal{H}^{\otimes n})\\ \lvert\langle \varphi \rvert\psi^{\otimes n} \rangle|^2 \geq 1 - \epsilon}}
        \sum_{j\in[k]} \theta_j \entropy_{\alpha_j}\left(\Tr_j \vectorstate{\varphi}\right)=
        \sum_{j\in[k]} \theta_j \entropy\left(\Tr_j \vectorstate{\psi}\right).
    \end{equation}
In particular
    \begin{equation}
        \lim_{\epsilon \to 0}\limsup_{n \to \infty} \frac{1}{n} \inf_{\substack{\varphi \in \unitvectors(\mathcal{H}^{\otimes n})\\ \lvert\langle \varphi \rvert\psi^{\otimes n} \rangle|^2 \geq 1 - \epsilon}}
         \entropy_{\alpha}\left(\Tr_j \vectorstate{\varphi}\right)=
         \entropy\left(\Tr_j \vectorstate{\psi}\right),
    \end{equation}
    for any $\alpha\in [0,1]$ and $j\in[k]$.
\end{lemma}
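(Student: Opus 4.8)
The plan is to reduce the statement to the classical AEP for Rényi entropies \eqref{eq:AEPentropic} applied to the Schmidt spectra of the marginals, handling the interaction between the $k$ different marginals via the continuity bound of \cref{rem:asymptoticCont}. Since each summand $\psi\mapsto\entropy_{\alpha_j}(\Tr_j\vectorstate{\psi})$ is, up to the factor $\frac{1}{1-\alpha_j}$, a logarithm of a spectral point (it is of the form $E^{\alpha_j}$ restricted to the bipartite cut $j\mid\bar j$), \cref{cor:deltaloginfsub} tells us that $\sum_j\theta_j\entropy_{\alpha_j}(\Tr_j\vectorstate{\cdot})$ lies in $\mathcal{F}_{\text{sub},k}$; in particular the left-hand side is exactly $\Phi(\sum_j\theta_j\entropy_{\alpha_j}(\Tr_j\vectorstate{\cdot}))(\psi)$, and by \cref{rem:asymptoticCont} this equals the value of an asymptotically continuous, weakly additive functional at $\psi$. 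So the task becomes: identify this functional with $\sum_j\theta_j\entropy(\Tr_j\vectorstate{\psi})$.

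First I would prove the upper bound. Fix $\epsilon$ and $n$. For each $j$, apply the classical smooth-entropy AEP \eqref{eq:AEPentropic} to the probability vector of Schmidt coefficients $\lambda^{(j)}$ of $\Tr_j\vectorstate{\psi}$ across the cut $j\mid\bar j$: there is a near-diagonal modification giving a distribution $\epsilon$-close in total variation with $\frac1n\entropy_{\alpha_j}$ converging to $\entropy(\lambda^{(j)})=\entropy(\Tr_j\vectorstate{\psi})$. The subtlety is that a single competitor vector $\varphi\in\unitvectors(\Hilbert^{\otimes n})$ must simultaneously be close to $\psi^{\otimes n}$ and have all $k$ marginal spectra tamed. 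I would instead invoke \cref{rem:asymptoticCont} to avoid constructing $\varphi$ by hand: the continuity bound \eqref{eq:contBound} shows that for $E'\in\mathcal{F}_{\text{sub},k}$ asymptotically continuous and weakly additive, $\Phi(E')(\psi)=E'(\psi)$ (this is exactly the computation \eqref{eq:idempotentE} in the proof of \cref{thm:AEP}). Hence, writing $E_\theta^{\vec\alpha}:=\sum_j\theta_j\entropy_{\alpha_j}(\Tr_j\vectorstate{\cdot})$, the left-hand side equals $\Phi(E_\theta^{\vec\alpha})(\psi)$, and by \cref{thm:AEP} this is a weakly additive, asymptotically continuous functional. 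For the upper bound it then suffices to exhibit, for each $j$ separately, product competitors matching $\entropy(\Tr_j\vectorstate{\psi})$: take $\varphi$ a typical projection in the $j$-cut, as in the classical AEP, and note $\entropy_{\alpha_i}(\Tr_i\vectorstate{\varphi})$ for $i\ne j$ is controlled by $\log\rank\le n\log\dim\Hilbert_i$ which, after the asymptotically-continuous normalization, washes out in the limit — more cleanly, run the argument for all $j$ at once using the conditional typical subspace of $\vectorstate{\psi}^{\otimes n}$ with respect to every marginal simultaneously, which exists by the joint AEP for the single pure state $\psi$.

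For the lower bound, I would use monotonicity of $\entropy_\alpha$ in $\alpha$: since $\alpha_j\le 1$, $\entropy_{\alpha_j}\ge\entropy_1=\entropy$, so for \emph{every} competitor $\varphi$ with $\lvert\braket{\varphi}{\psi^{\otimes n}}\rvert^2\ge1-\epsilon$ we have $\frac1n\sum_j\theta_j\entropy_{\alpha_j}(\Tr_j\vectorstate{\varphi})\ge\frac1n\sum_j\theta_j\entropy(\Tr_j\vectorstate{\varphi})$, and then asymptotic continuity of the von Neumann entropy (the $\alpha_j=1$ case, which is just Fannes–Audenaert applied cut by cut) gives $\frac1n\entropy(\Tr_j\vectorstate{\varphi})\ge\entropy(\Tr_j\vectorstate{\psi})-c(\epsilon)\log\dim\Hilbert_j-o(1)$; taking $\limsup_n$ then $\epsilon\to0$ yields the bound. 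The "in particular" statement is the special case $\theta=e_j$, $\alpha_j=\alpha$. The main obstacle I anticipate is the upper bound: making the typical-subspace competitor work simultaneously for all $k$ marginals, i.e.\ checking that the conditional typical projector of $\psi^{\otimes n}$ has the correct Rényi entropies of \emph{all} its reduced states (not just small rank on the "other" side) — but this follows because conditioning on a typical subspace leaves each marginal spectrum close to the $n$-fold product spectrum, whose $\frac1n\entropy_{\alpha_j}$ converges to $\entropy(\Tr_j\vectorstate{\psi})$ by the classical AEP.
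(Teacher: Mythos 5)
Your lower bound is exactly the paper's argument: use $\entropy_{\alpha_j}\ge\entropy_1=\entropy$ for $\alpha_j\le 1$ and then Fannes--Audenaert continuity of the von Neumann entropy cut by cut, so that part is fine. The framing through $\Phi$, $\mathcal{F}_{\text{sub},k}$ and \cref{rem:asymptoticCont} is harmless but does no work (and the claim that a convex combination of spectral-point restrictions lies in $\mathcal{F}_{\text{sub},k}$ via \cref{cor:deltaloginfsub} is not justified: logarithmic boundedness of a convex combination does not follow termwise, since a sum of maxima only upper-bounds, not lower-bounds, the maximum of the sums). Also note that your intermediate ``for each $j$ separately'' suggestion cannot work: a competitor that only tames cut $j$ contributes $\theta_i\,n\log\dim\Hilbert_i^{\otimes 1}$ for $i\ne j$, which survives the division by $n$; you correctly retreat to a single simultaneous competitor.

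The genuine gap is in how you justify that simultaneous competitor. Your closing claim --- that conditioning on a typical subspace leaves each marginal spectrum close to the $n$-fold product spectrum, ``whose $\frac1n\entropy_{\alpha_j}$ converges to $\entropy(\Tr_j\vectorstate{\psi})$ by the classical AEP'' --- is false as stated, for two reasons. First, the unsmoothed R\'enyi entropy of the product spectrum is additive, $\frac1n\entropy_{\alpha_j}\bigl((\Tr_j\vectorstate{\psi})^{\otimes n}\bigr)=\entropy_{\alpha_j}(\Tr_j\vectorstate{\psi})$, which for $\alpha_j<1$ is strictly larger than $\entropy(\Tr_j\vectorstate{\psi})$ in general; the convergence to $\entropy$ in \cref{eq:AEPentropic} only holds for the \emph{smoothed} quantities. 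Second, ``spectrum close to the product spectrum'' does not control $\entropy_{\alpha_j}$ for $\alpha_j<1$: these entropies (in particular $\entropy_0$) are not continuous in this sense, since spreading an $\epsilon$ mass over exponentially many new eigenvalues changes them by $\Theta(n)$. What is actually needed, and what the paper does, is to bound every $\entropy_{\alpha_j}$ from above by $\entropy_0=\log\rank$, take the explicit vector $\varphi=P\psi^{\otimes n}/\norm{P\psi^{\otimes n}}$ where $P$ is the \emph{product} over $j\in[k]$ of the projectors onto the top Schmidt vectors of cut $j$ retaining all but $\epsilon/k$ of the weight, check $\norm{P\psi^{\otimes n}}^2\ge1-\epsilon$ by a union bound, and verify from the Schmidt decomposition that the Schmidt rank of $P\psi^{\otimes n}$ across cut $j$ is at most the smoothed support size $r_j'$; only then does the classical AEP, in its smooth max-entropy form $\frac1n\entropy_0^{\epsilon/k}\to\entropy$, give the upper bound $\sum_j\theta_j\entropy(\Tr_j\vectorstate{\psi})$. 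Your proposal gestures at this construction but replaces the rank verification by the incorrect continuity claim, so the upper bound is not established.
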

\begin{proof}

First note that 
\begin{equation}\label{eq:evaluationmainge}
\begin{split}
    \inf_{\substack{\varphi \in \unitvectors(\mathcal{H}^{\otimes n})\\ \lvert\langle \varphi \rvert\psi^{\otimes n} \rangle|^2 \geq 1 - \epsilon}}
        \sum_{j\in[k]} \theta_j \entropy_{\alpha_j}\left(\Tr_j \vectorstate{\varphi}\right)
        &\ge
        \sum_{j\in[k]} \theta_j
        \inf_{\substack{\varphi \in \unitvectors(\mathcal{H}^{\otimes n})\\ \lvert\langle \varphi \rvert\psi^{\otimes n} \rangle|^2 \geq 1 - \epsilon}}
         \entropy_{\alpha_j}\left(\Tr_j \vectorstate{\varphi}\right)\\
         &\ge
        \sum_{j\in[k]} \theta_j
        \inf_{\substack{\varphi \in \unitvectors(\mathcal{H}^{\otimes n})\\ \lvert\langle \varphi \rvert\psi^{\otimes n} \rangle|^2 \geq 1 - \epsilon}}
         \entropy\left(\Tr_j \vectorstate{\varphi}\right)
         .
\end{split}
\end{equation}
The rest of the proof (of this direction) is done by the fact that the von Neumann entropy is asymptotically continuous (Fannes' inequality \cite{audenaert2007sharp}), which gives
\begin{equation}
    \entropy\left(\Tr_j \vectorstate{\varphi}\right) \ge
    \entropy\left(\Tr_j \vectorstate{\psi}^{\otimes n}\right) +\sqrt{\epsilon}\,n\log\dim\Hilbert + h(\sqrt{\epsilon}),
\end{equation}
The difference terms vanish if we divide them by $n$ and take $\lim_{\epsilon \to 0}\limsup_{n \to \infty}$.

To show the converse, first we write
\begin{equation}\label{eq:HinfvarpileHinfP}
\begin{split}
  \inf_{\substack{\varphi \in \unitvectors(\mathcal{H}^{\otimes n})\\ \lvert\langle \varphi \rvert\psi^{\otimes n} \rangle|^2 \geq 1 - \epsilon}}
        \sum_{j\in[k]} \theta_j \entropy_{\alpha_j}\left(\Tr_j \vectorstate{\varphi}\right)
        &\le
        \inf_{\substack{\varphi \in \unitvectors(\mathcal{H}^{\otimes n})\\ \lvert\langle \varphi \rvert\psi^{\otimes n} \rangle|^2 \geq 1 - \epsilon}}
        \sum_{j\in[k]} \theta_j
         \entropy_{0}\left(\Tr_j \vectorstate{\varphi}\right)\\
         &=
        \inf_{\substack{\varphi \in \unitvectors(\mathcal{H}^{\otimes n})\\ \lvert\langle \varphi \rvert\psi^{\otimes n} \rangle|^2 \geq 1 - \epsilon}}
        \sum_{j\in[k]} \theta_j\,
         \rank\left(\Tr_j \vectorstate{\varphi}\right) 
         .
\end{split}
\end{equation}
We aim to show that
\begin{equation}\label{eq:infvarphileinfQ}
    \inf_{\substack{\varphi \in \unitvectors(\mathcal{H}^{\otimes n})\\ \lvert\langle \varphi \rvert\psi^{\otimes n} \rangle|^2 \geq 1 - \epsilon}}  \sum_{j\in[k]} \theta_j  
    \rank(\Tr_j \vectorstate{\varphi})\le
    \sum_{j\in[k]} \theta_j \,
    \inf_{\substack{Q\in 
    B_{{\epsilon/k}}(\Tr_j\vectorstate{\psi})}}
     \lvert\supp Q\rvert.
\end{equation}
To do this, we show that for any $k$-tuple of probability distributions $(Q_1,\dots,Q_{k})$, such that $Q_j\in B_{{\epsilon/k}}(\Tr_j\vectorstate{\psi})$, there is a suitable vector $\varphi$, such that $\rank(\Tr_j \vectorstate{\varphi})\le \lvert \supp Q_j\rvert$ for each $j\in[k]$. 
In fact it is enough to consider the probability distributions $Q_j\in B_{{\epsilon/k}}(\Tr_j\vectorstate{\psi})$ with the minimal possible support. Those can be achieved by starting from the probability distribution consisting of the eigenvalues of $\Tr_j\vectorstate{\psi}$ and reassigning the lowest probability values up to ${ {\epsilon/k}}$ total probability. 

Let 
\begin{equation}
    \psi^{\otimes n}= \sum_{i=1}^{r_j} \sqrt{\lambda^{(j)}_i} e^{(j)}_i \otimes f^{(j)}_i
\end{equation}
be the Schmidt decomposition of $\psi^{\otimes n}$ over the $j$-th bipartition, where $\{e^{(j)}_i \}_i$ and $\{f^{(j)}_i \}_i$ are orthonormal systems on $\Hilbert_j^{\otimes n}$ and $\bigotimes_{j'\neq j} \Hilbert_{j'}^{\otimes n}$ respectively, and $\lambda^{(j)}_i>0$ are the decreasingly ordered eigenvalues of $\Tr_j\vectorstate{\psi}^{\otimes n}$ (possibly containing multiplicities). Let
\begin{equation}
    P\coloneqq \left(\sum_{i=1}^{r'_1} \vectorstate{e^{(1)}_i} \right)\otimes\dots\otimes\left(
    \sum_{i=1}^{r'_k} \vectorstate{e^{(k)}_i}
    \right),
\end{equation}
where $r'_j\in\naturals$ are the minimal integers such that $\sum_{i=r'_j+1}^{r_j} \lambda^{(j)}_i \le \frac{\epsilon}{k}$. By the union bound (applied to intersections) we have
\begin{equation}
    \norm{P\psi^{\otimes n}}^2 \ge 1 - \sum_{j=1}^k \norm{\left(I-\sum_{i=1}^{r'_j} \vectorstate{e^{(j)}_i}\right)\psi^{\otimes n}}^2 \ge 1- \sum_{j=1}^k \frac{\epsilon}{k}\eqqcolon 1-\epsilon.
\end{equation}
Then also ${\textstyle\braket{\psi^{\otimes n}}{\frac{P\psi^{\otimes n}}{\norm{P\psi^{\otimes n}}}}\ge 1- \epsilon}$, therefore $\varphi\coloneqq P\psi^{\otimes n}/\norm{P\psi^{\otimes n}}$ admits the condition of the infimum.

The Schmidt decomposition of $P\psi^{\otimes n}$ over the $j$-th bipartition is 
\begin{equation}
\begin{split}
    P\psi^{\otimes n}&=\sum_{i=1}^{r_j} \sqrt{\lambda^{(j)}_i}  \left(\sum_{i'=1}^{r'_j} \vectorstate{e^{(j)}_{i'}} \right)e^{(j)}_i \otimes  \left(\bigotimes_{l\neq j}\sum_{i'=1}^{r'_l} \vectorstate{e^{(l)}_{i'}} \right) f^{(j)}_i\\
    &=
    \sum_{i=1}^{r'_j} \sqrt{\lambda^{(j)}_i}  e^{(j)}_i \otimes  \left(\bigotimes_{l\neq j}\sum_{i'=1}^{r'_l} \vectorstate{e^{(l)}_{i'}} \right) f^{(j)}_i,
\end{split}
\end{equation}
which has Schmidt rank at most $r'_j$, i.e. $\rank\Tr_j\vectorstate{\varphi}\le r'_j$. 
On the other hand, $r'_j$ is by definition the lowest achievable support size of probability distributions attainable by reassigning up to $\epsilon/k$ total probability in the distribution consisting of the eigenvalues of $\Tr_j\vectorstate{\psi}$. In other words, for any $j\in [k]$ and $Q\in B_{{\epsilon/k}}(\Tr_j\vectorstate{\psi})$, we have $\rank(\Tr_j\vectorstate{\varphi})\le\supp Q$ for the constructed $\varphi$, which implies \cref{eq:infvarphileinfQ}.

The proof is finished by dividing \cref{eq:HinfvarpileHinfP} by $n$ and taking the limsup $n\to\infty$, then $\epsilon\to 0$.
By the (classical) AEP (\cref{eq:AEPentropic}) we have
\begin{equation}
    \lim_{\epsilon\to 0}\limsup_{n\to\infty}\frac{1}{n}
    \sum_{j\in[k]} \theta_j
    \inf_{Q\in B_{{\epsilon/k}}\left(\Tr_j \vectorstate{\psi}^{\otimes n}\right)} 
    \entropy_0(P)=
    \sum_{j\in[k]} \theta_j
    \entropy\left(\Tr_j \vectorstate{\psi}^{\otimes n}\right).
\end{equation}

\end{proof}

\begin{corollary}\label{prop:knownmeasuresPhi0}
The smoothing limits of $E^{\alpha,\theta}$ for $\alpha\in[0,1)$ and $\theta\in\distributions([k])$, introduced in \cite{vrana2023family} is 
\begin{equation}\label{eq:smoothofknown}
\Phi(E^{\alpha,\theta})(\psi)=\sum_{j\in[k]} \theta_j \entropy\left(\frac{\Tr_j\vectorstate{\psi}}{\norm{\psi}^2}\right).
\end{equation}
These entanglement measures are elements of $\mathcal{F}_k$
\end{corollary}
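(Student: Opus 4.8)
The plan is to sandwich the smoothing limit $\Phi(E^{\alpha,\theta})$ between two quantities that both equal $\sum_{j\in[k]}\theta_j\entropy(\Tr_j\vectorstate{\psi}/\norm{\psi}^2)$, exploiting the two bounds \eqref{eq:knownelementslowerbound} and \eqref{eq:knownelementsupperbound} on the original functionals together with \cref{lem:quantumAEP}. Since $\psi$ is a unit vector here, $\norm{\psi}^2=1$ and the target is simply $\sum_{j\in[k]}\theta_j\entropy(\Tr_j\vectorstate{\psi})$.

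First I would establish the upper bound on $\Phi(E^{\alpha,\theta})(\psi)$. For every admissible $\varphi\in\unitvectors(\Hilbert^{\otimes n})$ with $\lvert\braket{\varphi}{\psi^{\otimes n}}\rvert^2\ge 1-\epsilon$ we have, by \eqref{eq:knownelementsupperbound}, $E^{\alpha,\theta}(\varphi)\le\sum_{j\in[k]}\theta_j\entropy_0(\Tr_j\vectorstate{\varphi})$. Taking the infimum over such $\varphi$, then $\frac1n$, $\limsup_{n\to\infty}$, and $\lim_{\epsilon\to 0}$, the right-hand side converges to $\sum_{j\in[k]}\theta_j\entropy(\Tr_j\vectorstate{\psi})$ by \cref{lem:quantumAEP} (applied with $\alpha_j=0$). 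Hence $\Phi(E^{\alpha,\theta})(\psi)\le\sum_{j\in[k]}\theta_j\entropy(\Tr_j\vectorstate{\psi})$.

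For the matching lower bound I would use \eqref{eq:knownelementslowerbound}, namely $E^{1,\theta}(\varphi)=\sum_{j\in[k]}\theta_j\entropy(\Tr_j\vectorstate{\varphi})\le E^{\alpha,\theta}(\varphi)$ for every unit vector $\varphi$, so that the infimum defining $\Phi(E^{\alpha,\theta})(\psi)$ is bounded below by the infimum of $\sum_{j\in[k]}\theta_j\entropy(\Tr_j\vectorstate{\varphi})$ over the same set; applying \cref{lem:quantumAEP} once more (with all $\alpha_j=1$) shows this lower bound also equals $\sum_{j\in[k]}\theta_j\entropy(\Tr_j\vectorstate{\psi})$. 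Combining the two inequalities gives \eqref{eq:smoothofknown}. Finally, to conclude $\Phi(E^{\alpha,\theta})\in\mathcal{F}_k$: by \cref{cor:deltaloginfsub} (via $\Delta_k\subset\Delta_{\log,k}$, restricted to unit vectors) each $E^{\alpha,\theta}$ lies in $\mathcal{F}_{\text{sub},k}$, so by \cref{cor:smoothliminF} it suffices to check that $\Phi(E^{\alpha,\theta})$ is additive; but $\psi\mapsto\sum_{j\in[k]}\theta_j\entropy(\Tr_j\vectorstate{\psi})$ is manifestly additive under tensor products because the von Neumann entropy of a marginal is additive, and normalization on the GHZ state holds since $\Phi(\GHZ)(\psi)$ evaluates to $\sum_j\theta_j\cdot 1=1$ on the tripartite-or-more GHZ state. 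Then \cref{cor:smoothliminF} delivers membership in $\mathcal{F}_k$.

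The main obstacle I expect is purely bookkeeping rather than conceptual: making sure the order of limits ($\inf$ over $\varphi$, then $\frac1n$, then $\limsup_n$, then $\lim_{\epsilon\to0}$) is respected when passing the inequalities through \cref{lem:quantumAEP}, and checking that the hypotheses of \cref{lem:quantumAEP} (the ranges $0\le\alpha_j\le1$, $0\le\theta_j$) indeed cover both the $\alpha_j=0$ and $\alpha_j=1$ specializations we invoke. A minor additional point is confirming that the normalization constant $\Phi(\GHZ)$ equals $1$ rather than needing an explicit rescaling, which is immediate from $\entropy(\Tr_j\vectorstate{\GHZ})=1$ for each $j$ and $\sum_j\theta_j=1$.
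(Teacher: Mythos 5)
Your proposal is correct and follows essentially the same route as the paper: sandwiching $\Phi(E^{\alpha,\theta})$ between the smoothing limits of $\sum_j\theta_j\entropy(\Tr_j\cdot)$ and $\sum_j\theta_j\entropy_0(\Tr_j\cdot)$ via \eqref{eq:knownelementslowerbound}, \eqref{eq:knownelementsupperbound} and \cref{lem:quantumAEP}, then invoking \cref{cor:smoothliminF} together with additivity of the von Neumann entropies and $\entropy(\Tr_j\vectorstate{\GHZ})=1$ for membership in $\mathcal{F}_k$. The bookkeeping points you flag (order of limits, $\alpha_j\in\{0,1\}$ being covered by the lemma, normalization on the GHZ state) are indeed the only details to check, and they go through exactly as you describe.
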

\begin{proof}
    The first statement is the direct consequence of \cref{lem:quantumAEP} combined with the bounds  \cref{eq:knownelementslowerbound,eq:knownelementsupperbound}.
    The second statement for the functionals $E^{\alpha,\theta}$ is the consequence of \cref{cor:smoothliminF}, since the von Neumann entropies are additive, and the marginal of the $\GHZ$ state by any non-trivial bipartition is the maximally mixed state on a 2-dimensional subspace, with von Neumann entropy $1$. Alternatively, the axioms \labelcref{it:additive,it:asymptoticCont,it:monotone,it:norm} can be easily verified directly.

\begin{remark}
Note that any convex combination of the von Neumann enropies is additive, asymptotically continuous and monotone on average. Then we can extend the family of entanglement measures in \eqref{eq:smoothofknown} to a larger subset of $\mathcal{F}_k$ by
    \begin{equation}
    \entropy^\theta(\psi)\coloneqq
        \sum_{b\subseteq[k]} \theta_b \entropy\left(\frac{\Tr_b\vectorstate{\psi}}{\norm{\psi}^2}\right),
    \end{equation}
    where $b$ runs over all of the bipartitions of the $k$-partite system.

\end{remark}

\end{proof}

\section{Final remarks}

We have shown that subadditive entanglement measures ($\mathcal{F}_{k,\text{sub}}$) admitting certain weak condition (\cref{def:deltasub}) are subject to an asymptotic equipartition property, in the form of the smoothing limit having an idempotent action on them. The image of the smoothing map consists of the subadditive, weakly monotone and asymptotically continuous entanglement measures. It is not clear whether the measures in the image are automatically monotone on average, therefore being the part of $\mathcal{F}_{k,\text{sub}}$ or not. On the other hand, the additive part of the image is exactly $\mathcal{F}_k$ (up to normalization), the entanglement measures characterizing $\text{LOCC}_q$.

We obtained elements of $\mathcal{F}_k$ when we evaluated the regularization of the smoothed version of the family of multipartite measures $E^{\alpha,\theta}$ (elements of the asymptotic spectrum of LOCC) introduced in \cite{vrana2023family}. Still we do not know in general, if all the elements of the image of $\Phi$ are additive or not, in other words if $\mathcal{F}_k$ is a proper subset of the image of $\Phi$ or if these sets are equal. The diagram below summarizes the
relations between the different entanglement measures considered in this work, also indicating the open questions.

\begin{center}
\begin{tikzpicture}[scale=1, every node/.style={scale=1}]
  \node (top) at (0, 2.2) {$\mathcal{F}_{k,\text{sub}}$};
  \node (phi1) at (-0.6, 1.2) {$\Phi[\mathcal{F}_{k,\text{sub}}]$};
  \node (phi2) at (-1.3, 0) {$\Phi[\Delta_{k,\text{sub}}]$};
  \node (sum) at (-1.3, -1.2) {$\{\entropy^\theta\}_{\supp\theta\subseteq [k]}$};
    \node (sum) at (1.55, -1.2) {$\{\entropy^\theta\}_{\supp\theta\subseteq 2^{[k]}}$};
  \node (f2) at (0.75, 0) {$\mathcal{F}_k$};
  \node (delta) at (2.6, 1.3) {$\Delta_{k,\text{sub}}$ (on unit vectors)};
  \node (arrowtext) at (1.35, 0.7) {\small $\alpha \to 1$};

  \node at (-0.85, 1.9) {\rotatebox{45}{$\overset{\Large ?}{\subsetneq}$}};
  \node at (-1.5, 0.7) {\rotatebox{45}{$\subseteq$}};
  \node at (-1.95, -0.6) {\rotatebox{90}{$\in$}};
  \node at (0.7, -0.6) {\rotatebox{90}{$\in$}};
  \node at (0.8, 0.7) {\rotatebox{90}{$\longleftarrow$}};
  \node at (0.6, 1.75) {\rotatebox{-45}{$\supsetneq$}};
  \node at (0.05, 0.6) {\rotatebox{-45}{$\supseteq\overset{?}{=}$}};
  \node at (0.1, -1.2) {\rotatebox{0}{$\subseteq$}};
\end{tikzpicture}
\end{center}

\section*{Acknowledgement}

I thank Péter Vrana for the discussions and his insightful advice. 

This work was partially funded by the National Research, Development and Innovation Office of Hungary (NKFIH) via the research grants FK 146643,
K 146380, and EXCELLENCE 151342, and by the Ministry of Culture and Innovation and the National Research, Development and Innovation Office within the Quantum
Information National Laboratory of Hungary (Grant No. 2022-2.1.1-NL-2022-00004). 

\appendix

\section{Technical lemmas}

\begin{lemma}\label{lem:Tboundreformulatedelementwise}
    For a finite set $\mathcal{X}$, let $P(x), Q(x)\in\distributions(\mathcal{X})$ and $\psi_x, \varphi_x\in\unitvectors(\Hilbert)$ be unit vectors for each $x$. Also let
    \begin{equation*}
        \rho=\sum_{x\in\mathcal{X}} P(x) \vectorstate{\psi_x}\otimes\vectorstate{x}
    \end{equation*}
    and
    \begin{equation*}
        \sigma=\sum_{x\in\mathcal{X}} Q(x) \vectorstate{\varphi_x}\otimes\vectorstate{x}
    \end{equation*}
    be conditionally pure states in $\Hilbert\otimes\Diag(\complexes^{ \mathcal{X} })$, such that $\text{T}(\rho,\sigma)\le \epsilon$.
Then
 \begin{equation}\label{eq:PQvariationdistanceupperb}
    \TV(P,Q)=\frac{1}{2}\sum_{x\in\mathcal{X}} \lvert P(x)-Q(x)\rvert\le \epsilon,
 \end{equation}
 and
\begin{equation}
    \sum_{x\in\mathcal{X}}P(x)\text{T}(\vectorstate{\psi_x},\vectorstate{\varphi_x}) \le 2\epsilon.
\end{equation}

\end{lemma}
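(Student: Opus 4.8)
The plan is to derive both inequalities from the data processing inequality for the trace distance combined with the block-diagonal structure of conditionally pure states. Throughout I would regard $\rho$ and $\sigma$ as operators on $\Hilbert\otimes\Diag(\complexes^{\mathcal{X}})$ that are block diagonal with respect to the orthogonal decomposition indexed by $x\in\mathcal{X}$.

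For \cref{eq:PQvariationdistanceupperb}, I would apply the quantum channel $\Tr_{\Hilbert}$ (tracing out the quantum register and keeping the classical one), which by data processing can only decrease the trace distance. It sends $\rho\mapsto\sum_{x}P(x)\vectorstate{x}$ and $\sigma\mapsto\sum_{x}Q(x)\vectorstate{x}$; these are diagonal operators, so their trace distance equals $\TV(P,Q)$. Hence $\TV(P,Q)\le\T(\rho,\sigma)\le\epsilon$.

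For the second inequality, the key point is that since $\rho-\sigma=\sum_{x}\bigl(P(x)\vectorstate{\psi_x}-Q(x)\vectorstate{\varphi_x}\bigr)\otimes\vectorstate{x}$ is a direct sum of operators living on mutually orthogonal subspaces, the trace norm is additive over the blocks:
\[
2\,\T(\rho,\sigma)=\norm[1]{\rho-\sigma}=\sum_{x\in\mathcal{X}}\norm[1]{P(x)\vectorstate{\psi_x}-Q(x)\vectorstate{\varphi_x}}.
\]
I would then bound each summand from below by inserting the intermediate operator $P(x)\vectorstate{\varphi_x}$ and using the reverse triangle inequality:
\[
\norm[1]{P(x)\vectorstate{\psi_x}-Q(x)\vectorstate{\varphi_x}}\ge P(x)\norm[1]{\vectorstate{\psi_x}-\vectorstate{\varphi_x}}-\lvert P(x)-Q(x)\rvert=2P(x)\,\T(\vectorstate{\psi_x},\vectorstate{\varphi_x})-\lvert P(x)-Q(x)\rvert.
\]
Summing over $x$ and using \cref{eq:PQvariationdistanceupperb} in the form $\sum_x\lvert P(x)-Q(x)\rvert=2\TV(P,Q)\le 2\epsilon$ yields $2\epsilon\ge 2\T(\rho,\sigma)\ge 2\sum_x P(x)\,\T(\vectorstate{\psi_x},\vectorstate{\varphi_x})-2\epsilon$, which rearranges to the claimed bound $\sum_x P(x)\,\T(\vectorstate{\psi_x},\vectorstate{\varphi_x})\le 2\epsilon$.

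I expect no substantial obstacle here: the only points requiring care are that the trace norm is additive for a direct sum of blocks supported on orthogonal subspaces (a standard fact), and that the reverse triangle inequality is applied in the correct direction so that the factors of $2$ track properly. An alternative to the reverse-triangle step would be to note that $\vectorstate{\psi_x}-\vectorstate{\varphi_x}$ is traceless and its positive part has norm $\T(\vectorstate{\psi_x},\vectorstate{\varphi_x})$, but the triangle-inequality argument above is the shortest route.
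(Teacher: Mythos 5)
Your proof is correct and takes essentially the same route as the paper: block-diagonality makes $2\T(\rho,\sigma)=\sum_x\norm[1]{P(x)\vectorstate{\psi_x}-Q(x)\vectorstate{\varphi_x}}$, and the reverse triangle inequality with the intermediate operator $P(x)\vectorstate{\varphi_x}$ then yields the averaged bound after subtracting the total variation term. The only (minor) divergence is the first claim, where you get $\TV(P,Q)\le\T(\rho,\sigma)\le\epsilon$ by data processing under $\Tr_{\Hilbert}$, whereas the paper bounds $\sum_x\lvert P(x)-Q(x)\rvert$ directly from the trace distance; your variant is a clean and equally valid way to obtain that step.
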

 \begin{proof}
     We start with
     \begin{equation}\label{eq:Tforclassicalquantum}
     \begin{split}
         \text{T}(\rho,\sigma)&=
         \frac{1}{2}\norm[1]{\sum_{x\in\mathcal{X}} P(x) \vectorstate{\psi_x}\otimes\vectorstate{x}-\sum_{x\in\mathcal{X}} Q(x) \vectorstate{\varphi_x}\otimes\vectorstate{x}}\\
         &=
         \frac{1}{2}\norm[1]{\sum_{x\in\mathcal{X}}\left( P(x) \vectorstate{\psi_x}- Q(x) \vectorstate{\varphi_x}\right)\otimes\vectorstate{x}}\\
         &=
         \frac{1}{2}\sum_{x\in\mathcal{X}}\norm[1]{ P(x) \vectorstate{\psi_x}- Q(x) \vectorstate{\varphi_x}}.
     \end{split}
     \end{equation}
Next we use the triangle inequality to write
\begin{equation}
\begin{split}
    \text{T}(\rho,\sigma)&=
         \frac{1}{2}\sum_{x\in\mathcal{X}}\norm[1]{ P(x) \vectorstate{\psi_x}- P(x) \vectorstate{\varphi_x} 
         + (P(x)-Q(x))\vectorstate{\varphi_x}}\\
         &\ge
         \frac{1}{2}\sum_{x\in\mathcal{X}}\bigg(P(x)\norm[1]{\vectorstate{\psi_x}- \vectorstate{\varphi_x}}
         -\lvert P(x)-Q(x) \rvert
         \bigg)\\
         &=
         \frac{1}{2}\sum_{x\in\mathcal{X}}\left(P(x)\norm[1]{\vectorstate{\psi_x}- \vectorstate{\varphi_x}}\right)
-\frac{1}{2}\sum_{x\in\mathcal{X}}\lvert P(x)-Q(x)\rvert.
\end{split}
\end{equation}
To finish the proof of both statements, we need to bound the second term by $\epsilon$.
This is done by the fact that $2\text{T}(\rho,\sigma)\ge \Tr (\rho-\sigma)=\sum_{x\in\mathcal{X}} P(x)-Q(x)$, then by assumption $\sum_{x\in\mathcal{X}} P(x)-Q(x) \le 2\epsilon$ and also $\sum_{x\in\mathcal{X}} Q(x)-P(x) \le 2\epsilon$.

 \end{proof}

\begin{lemma}\label{lem:LOCCoutcomeproblower}
    Let $\psi, \varphi\in\unitvectors(\Hilbert)$ such that $\lvert\braket{\varphi}{\psi}\rvert^2\ge 1-\epsilon$  and $\Lambda$ is an LOCC channel transforming the pure state $\vectorstate{\psi}$ as
    \begin{equation}
        \vectorstate{\psi}
\overset{\Lambda}{\longrightarrow} 
\sum_{x \in \mathcal{X}} P(x) 
\vectorstate{\psi_x}  \otimes \vectorstate{x},
    \end{equation}
    and similarly 
    \begin{equation}
        \vectorstate{\varphi}
\overset{\Lambda}{\longrightarrow} 
\sum_{x \in \mathcal{X}} Q(x) 
\vectorstate{\varphi_x}  \otimes \vectorstate{x},
    \end{equation}
    where the support of the probability distributions $P$ and $Q$ may differ. Then
\begin{equation}
    \sum_{\substack{x\in\mathcal{X}\\\lvert\braket{\varphi_x}{\psi_x}\rvert^2\ge 1-\epsilon'}} 
     Q(x)\ge   
     1
     -2\sqrt{\epsilon}\left(1+\frac{1}{\sqrt{\epsilon'}}\right).
\end{equation}
\end{lemma}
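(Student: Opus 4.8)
The plan is to push the closeness of $\psi$ and $\varphi$ through the channel $\Lambda$ via the data processing inequality, convert the resulting bound on the conditionally pure output states into pointwise statements using \cref{lem:Tboundreformulatedelementwise}, and then combine a Markov-type argument with the total variation bound to pass from a $P$-weighted average to the $Q$-probability of the ``good'' outcomes.

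First I would observe that, by \cref{eq:tracedistofpure} and the hypothesis $\lvert\braket{\varphi}{\psi}\rvert^2\ge 1-\epsilon$, we have $\T(\vectorstate{\varphi},\vectorstate{\psi})=\sqrt{1-\lvert\braket{\varphi}{\psi}\rvert^2}\le\sqrt{\epsilon}$. Since $\Lambda$ here retains all measurement outcomes in the classical register, it is a genuine quantum channel, so the data processing inequality for the trace distance gives
\begin{equation*}
\T\!\left(\sum_{x\in\mathcal{X}}P(x)\vectorstate{\psi_x}\otimes\vectorstate{x},\ \sum_{x\in\mathcal{X}}Q(x)\vectorstate{\varphi_x}\otimes\vectorstate{x}\right)\le\sqrt{\epsilon}.
\end{equation*}
Applying \cref{lem:Tboundreformulatedelementwise} with its $\epsilon$ replaced by $\sqrt{\epsilon}$ then yields simultaneously $\TV(P,Q)\le\sqrt{\epsilon}$ and $\sum_{x\in\mathcal{X}}P(x)\,\T(\vectorstate{\psi_x},\vectorstate{\varphi_x})\le 2\sqrt{\epsilon}$.

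Next I would apply Markov's inequality to the second estimate: the set $\mathcal{X}_{\mathrm{bad}}\coloneqq\{x\in\mathcal{X}:\T(\vectorstate{\psi_x},\vectorstate{\varphi_x})>\sqrt{\epsilon'}\}$ satisfies $P(\mathcal{X}_{\mathrm{bad}})\le 2\sqrt{\epsilon}/\sqrt{\epsilon'}$, so its complement $\mathcal{X}_{\mathrm{good}}$ has $P(\mathcal{X}_{\mathrm{good}})\ge 1-2\sqrt{\epsilon}/\sqrt{\epsilon'}$. By \cref{eq:tracedistofpure} again, $\T(\vectorstate{\psi_x},\vectorstate{\varphi_x})\le\sqrt{\epsilon'}$ is equivalent to $\lvert\braket{\varphi_x}{\psi_x}\rvert^2\ge 1-\epsilon'$, so $\mathcal{X}_{\mathrm{good}}$ is precisely the index set appearing in the sum to be bounded. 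Finally, using $\lvert P(S)-Q(S)\rvert\le\TV(P,Q)$ for any $S\subseteq\mathcal{X}$,
\begin{equation*}
\sum_{\substack{x\in\mathcal{X}\\\lvert\braket{\varphi_x}{\psi_x}\rvert^2\ge 1-\epsilon'}}Q(x)=Q(\mathcal{X}_{\mathrm{good}})\ge P(\mathcal{X}_{\mathrm{good}})-\sqrt{\epsilon}\ge 1-\frac{2\sqrt{\epsilon}}{\sqrt{\epsilon'}}-\sqrt{\epsilon}\ge 1-2\sqrt{\epsilon}\left(1+\frac{1}{\sqrt{\epsilon'}}\right),
\end{equation*}
where the last step uses $\sqrt{\epsilon}\le 2\sqrt{\epsilon}$.

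All steps are elementary; the only point that deserves care is ensuring that the map to which data processing is applied is trace preserving, i.e.\ that no outcome is post-selected away — but this is exactly the framework of \cref{sub:LOCC}, so there is no genuine obstacle. A minor bookkeeping subtlety is keeping straight which of $P$ and $Q$ weights which quantity (the average bound is in terms of $P$, while the conclusion is about $Q$), which is handled cleanly by the $\TV(P,Q)$ estimate.
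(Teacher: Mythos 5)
Your proposal is correct and follows essentially the same route as the paper: trace distance $\le\sqrt{\epsilon}$ via \cref{eq:tracedistofpure}, data processing through $\Lambda$, \cref{lem:Tboundreformulatedelementwise} to get the $\TV(P,Q)$ and $P$-averaged trace-distance bounds, then Markov's inequality plus the total-variation estimate to pass from $P$ to $Q$. Your intermediate bound is in fact marginally tighter (you use $\lvert P(S)-Q(S)\rvert\le\TV(P,Q)\le\sqrt{\epsilon}$ where the paper relaxes to $2\sqrt{\epsilon}$), and it still implies the stated inequality.
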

    
\begin{proof}
    We rewrite the conditon for the overlap using \cref{eq:tracedistofpure}, then relax it using the data processing inequality:
\begin{equation}
\begin{split}
    \sqrt{\epsilon} &\ge \text{T}\left(\vectorstate{\varphi},\vectorstate{\psi}\right)\\&\ge 
    \text{T}\left(\Lambda(\vectorstate{\varphi}),\Lambda(\vectorstate{\psi})\right)
    \\&=
    \text{T}\left(\sum_{{x} \in \mathcal{X}} Q({x}) 
\vectorstate{\varphi_{{x}}}  \otimes \vectorstate{{x}},
\sum_{{x} \in \mathcal{X}} P({x}) 
\vectorstate{\psi_{{x}}}\otimes \vectorstate{{x}}\right).
\end{split}
\end{equation}
For the sake of brevity, let $\T_{{x}}\coloneqq\T(\vectorstate{\varphi_{{x}}},\vectorstate{\psi_{{x}}})$.
By \cref{lem:Tboundreformulatedelementwise} we have
\begin{equation}    
\begin{split}
2\sqrt{\epsilon}&\ge
\sum_{{x} \in \mathcal{X}}P({x}) \T_{{x}}
\\
&\ge
\sum_{\substack{{x} \in \mathcal{X}\\\T_{{x}}>\sqrt{\epsilon'}}}P({x})\T_{{x}} \\
&\ge
\sqrt{\epsilon'}\sum_{\substack{{x} \in \mathcal{X}\\\T_{{x}}>\sqrt{\epsilon'}}}P({x})
.
\end{split}
\end{equation}

Again by \cref{lem:Tboundreformulatedelementwise} and the previous bound  we have
\begin{equation}
    \sum_{\substack{{x}\in\mathcal{X}\\ T_x\le \sqrt{\epsilon'}}} 
     Q({x})\ge \sum_{\substack{{x}\in\mathcal{X}\\ T_x\le \sqrt{\epsilon'}}} 
     P({x}) -2\sqrt{\epsilon}
     \ge  
     1 -2\frac{\sqrt{\epsilon}}{\sqrt{\epsilon'}} -2\sqrt{\epsilon}.
\end{equation}
\end{proof}

Next, we show that if the states $\varphi$ and $\psi^{\otimes n}$ are close in trace distance, then their maximal eigenvalues are also close.
\begin{lemma}\label{lem:rhosigmamaxeigenvalues}
    Let $\rho, \sigma \in \boundeds(\mathcal{H})$ be states and let $T(\rho,\sigma)\coloneqq\frac{1}{2}\Tr \lvert \rho-\sigma
    \rvert$ denote their trace distance. Assume that $T(\rho,\sigma)\le \epsilon$, then 
    \begin{equation}
        \lvert \lambda_{\rho,\text{max}}-\lambda_{\sigma,\text{max}} \rvert  \le 2\epsilon 
    \end{equation}
    where $\lambda_{\rho,\text{max}}$ denotes the largest eigenvalue of $\rho$.
\end{lemma}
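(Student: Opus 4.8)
The plan is to reduce the statement to the elementary fact that the largest eigenvalue of a positive semidefinite operator is its operator norm, namely $\lambda_{\rho,\text{max}}=\norm[\infty]{\rho}=\max_{v\in\unitvectors(\Hilbert)}\langle v|\rho|v\rangle$, combined with the bound $\norm[\infty]{A}\le\norm[1]{A}$ valid for every operator $A$.

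First I would pick a unit eigenvector $v\in\unitvectors(\Hilbert)$ of $\rho$ attaining the top eigenvalue, so that
\begin{equation*}
\lambda_{\rho,\text{max}}=\langle v|\rho|v\rangle=\langle v|\sigma|v\rangle+\langle v|(\rho-\sigma)|v\rangle\le\lambda_{\sigma,\text{max}}+\bigl\lvert\langle v|(\rho-\sigma)|v\rangle\bigr\rvert,
\end{equation*}
where the last inequality uses $\langle v|\sigma|v\rangle\le\lambda_{\sigma,\text{max}}$, which holds because $\sigma\ge 0$ and $\norm{v}=1$. Then I would estimate the perturbation term: since $\rho-\sigma$ is Hermitian,
\begin{equation*}
\bigl\lvert\langle v|(\rho-\sigma)|v\rangle\bigr\rvert\le\norm[\infty]{\rho-\sigma}\le\norm[1]{\rho-\sigma}=2\,\T(\rho,\sigma)\le 2\epsilon,
\end{equation*}
so that $\lambda_{\rho,\text{max}}\le\lambda_{\sigma,\text{max}}+2\epsilon$. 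Since the hypothesis $\T(\rho,\sigma)\le\epsilon$ is symmetric in $\rho$ and $\sigma$, the same argument with the roles exchanged gives $\lambda_{\sigma,\text{max}}\le\lambda_{\rho,\text{max}}+2\epsilon$, and combining the two bounds yields $\lvert\lambda_{\rho,\text{max}}-\lambda_{\sigma,\text{max}}\rvert\le 2\epsilon$.

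There is essentially no real obstacle here; the only step needing a word of justification is $\norm[\infty]{\rho-\sigma}\le\norm[1]{\rho-\sigma}$, which follows by diagonalizing the Hermitian operator $\rho-\sigma$ and observing that its largest-in-modulus eigenvalue is bounded by the sum of the moduli of all its eigenvalues. Alternatively, one can bypass the operator norm and invoke the projection characterization $\T(\rho,\sigma)=\max_{0\le Q\le I}\Tr\bigl(Q(\rho-\sigma)\bigr)$ with the choice $Q=\vectorstate{v}$, which gives directly $\langle v|(\rho-\sigma)|v\rangle\le\T(\rho,\sigma)\le\epsilon$ and hence even the sharper conclusion $\lvert\lambda_{\rho,\text{max}}-\lambda_{\sigma,\text{max}}\rvert\le\epsilon$; the factor $2$ stated in the lemma is, however, all that is required in the sequel.
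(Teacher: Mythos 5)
Your proof is correct and follows essentially the same route as the paper: evaluate $\rho-\sigma$ on a top eigenvector of $\rho$, use $\langle v|\sigma|v\rangle\le\lambda_{\sigma,\text{max}}$, and bound the resulting expectation by the trace norm (the paper handles the absolute value by a without-loss-of-generality ordering rather than by symmetry, and leaves the step $\langle v|(\rho-\sigma)|v\rangle\le 2\T(\rho,\sigma)$ implicit, which you spell out via $\norm[\infty]{\cdot}\le\norm[1]{\cdot}$). Your closing remark that the projection characterization of the trace distance yields the sharper constant $\epsilon$ is also correct, but, as you note, not needed for the lemma as stated.
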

\begin{proof}
   Without loss of generality, assume $\lambda_{\rho,\text{max}}\ge \lambda_{\sigma,\text{max}}$.
   Let $v$ be a unit eigenvector of $\rho$ with eigenvalue $\lambda_{\rho,\text{max}}$.
    We have the following chain of lower bounds for the trace distance:
    \begin{equation}
    \begin{split}
        T(\rho,\sigma)&\ge
    \frac{1}{2} \bra{v} 
         \left(\rho-\sigma \right)
    \ket{v}\\
    &=\frac{1}{2}\left(
    \lambda_{\rho,\text{max}}-\bra{\psi}
    \sigma
    \ket{\psi}
    \right)\\
    &\ge
    \frac{1}{2}\left(
    \lambda_{\rho,\text{max}}-
    \lambda_{\sigma,\text{max}}
    \right)
    \end{split}
    \end{equation}.
This inequality also holds with $\rho$ and $\sigma$ being swapped.
    \end{proof}

\bibliography{references}

\end{document}